\newcommand{\quoteparagraph}[1]{\noindent{\normalsize \bfseries #1}\hspace{1em}}
\newcommand{\R}{\ensuremath{\mathbb{R}}}
\newcommand{\N}{\ensuremath{\mathbb{N}}}
\newtheorem{theorem}{Theorem}[section]
\newtheorem{definition}[theorem]{Definition}
\newtheorem{assumption}[theorem]{Assumption}
\newtheorem{lemma}[theorem]{Lemma}
\newtheorem{remark}[theorem]{Remark}
\newtheorem{corollary}[theorem]{Corollary}
\numberwithin{equation}{section}
\newcommand{\TWRtext}{Terminal Wealth Relative }
\newcommand{\system}[1][]{{\textit{(system #1)}}}
\newcommand{\that}{\hat{t}}	
\newcommand{\f}{\varphi}	
\newcommand{\vek}[1][\f]{\bm #1}												
\newcommand{\emptyvar}{\raisebox{-0.5ex}{\scalebox{1.6}{$\cdot$}}}
\newcommand{\vekt}[1][i]{\vek[t]_{#1\emptyvar}}
\newcommand{\quot}[1][i]{(\nicefrac{\vekt[#1]}{\vek[\that]})}
\newcommand{\msupp}{\mathfrak{G}}												
\newcommand{\msuppR}{\mathfrak{R}}												
\newcommand{\Hess}[1]{\operatorname{Hess}_{#1}}											
\newcommand{\sprod}[2]{\langle {#1},{#2}\rangle}										
\newcommand{\risk}{\operatorname{r}}
\newcommand{\acHPR}{Holding Period Return (HPR) }
\newcommand{\acTWR}{\TWRtext\ (TWR) }
\newcommand{\TWRo}{\text TWR }
\newcommand{\TWR}[1][\empty]{\ifthenelse{\equal{#1}{\empty}}{\operatorname{TWR}}{\operatorname{TWR}_{#1}}}	  
\newcommand{\HPR}[1][\empty]{\ifthenelse{\equal{#1}{\empty}}{\operatorname{HPR}}{\operatorname{HPR}_{#1}}}    
\author{
  \normalsize \textsc{Andreas Hermes and Stanislaus Maier-Paape}\\[-0.2em]
    \small \textit{Institut f\"ur Mathematik, RWTH Aachen,}\\[-0.5em]
    \small \textit{Templergraben 55, D-52062 Aachen, Germany}\\[-0.5em]
    \small \href{mailto:ahermes@instmath.rwth-aachen.de}{ahermes@instmath.rwth-aachen.de}\\[-0.5em]
    \small \href{mailto:maier@instmath.rwth-aachen.de}{maier@instmath.rwth-aachen.de}
}
\date{
  \vspace{0.25em}
  \normalsize\today
  \vspace{-1cm}
}
\title{
  \vspace{-2cm}
  \Large Existence and Uniqueness for the Multivariate Discrete Terminal Wealth Relative
}
\newif\ifpdfgraphics  
\begin{document}

\maketitle

\vspace*{0.2cm}
\begin{quote}
  \small
  \quoteparagraph{Abstract}
  In this paper the multivariate fractional trading ansatz of money management from Vince \cite{vince:pmf90} is discussed.
  In particular, we prove existence and uniqueness of an ``optimal $f$'' of the respective optimization problem under
  reasonable assumptions on the trade return matrix. This result generalizes a similar result for the univariate fractional
  trading ansatz. Furthermore, our result guarantees that the multivariate optimal $f$ solutions can always be found
  numerically by steepest ascent methods. \\

  \quoteparagraph{Keywords} fractional trading, optimal f, multivariate discrete terminal wealth relative,
                            risk and money management, portfolio theory
  \end{quote}


\vspace*{-0.1cm}
            \section{Introduction}   \label{sec:introduction}

 Risk and money management for investment issues has always been at the heart of\\ finance. Going back to the
 1950s, Markowitz \cite{markowitz:pfs1991} invented the ``modern portfolio theory'',\\ where the additive
 expectation of a portfolio of different investments was maximized subject to a given risk expressed by volatility
 of the portfolio.

 When the returns of the portfolio are no longer calculated additive, but multiplicative in order to respect
 the needs of compound interest, the resulting optimization problem is known as ``fixed fractional trading''.
 In fixed fractional trading strategies an investor always wants to risk a fixed percentage of his current
 capital for future investments given some distribution of historic trades of his trading strategy.

 A first example of factional trading was established in the 1950s by Kelly \cite{kelly:nii} who found a criterion
 for an asymptotically optimal investment strategy for one investment instrument. Similarly, Vince in the 1990s
 (see \cite{vince:pmf90} and \cite{vince:mmm92}) used the fractional trading ansatz to optimize his position sizing.
 Although at first glance these two methods look quite different, they are in fact closely related as could be shown in
 \cite{maier:raft2016}. However, only recently in \cite{vince:rpm09}, Vince extended the fractional trading ansatz
 to portfolios of different investment instruments. The situation with $M$ investment instruments (systems) and $N$
 coincident realizations of absolute returns of these $M$ systems results in a trade return matrix $T$ described in
 detail in \eqref{eq:returnMatrix}. Given this trade return matrix, the ``\TWRtext\!\!'' \text{(TWR)} can be constructed
 (see \eqref{eq:HPRvec}) measuring the multiplicative gain of a portfolio resulting from a fixed vector
 $\vek=(\f_1,\dots,\f_M)$ of fractional investments into the $M$ systems. In order to find an optimal investment
 among all fractions $\vek$ the \TWRo has to be maximized
 \begin{align}\label{eq:TWRmax} 
   \underset{\vek\in\msupp}{\text{maximize}}\quad\TWR(\vek) ,
 \end{align}
 where $\msupp$ is the definition set of the \TWRo (see Definition~\ref{def:msupp} and \eqref{prob:discm}).

 Whereas in \cite{vince:rpm09}, Vince only stated this optimization problem and illustrated it with examples, in
 Section~\ref{sec:3} we give as our main result the necessary analysis. In particular, we investigate the definition
 set $\msupp$ of the \TWRo and fix reasonable assumptions (Assumption~\ref{discm}) under which \eqref{eq:TWRmax} has a
 unique solution. This unique solution may lie in $\overset{\circ}{\msupp}$ or on $\partial\msupp$ as different examples in
 Section \ref{sec:4} show. Our result extend the results of Maier--Paape \cite{maier:eto2013}, Zhu \cite{zhu:mais2007}
 ($M=1$\  case only) and parts of the PhD of Hermes \cite{hermes:mft2016} on the discrete multivariate \text{TWR}.
 One of the main ingredients to show the uniqueness of the maximum of \eqref{eq:TWRmax} is the concavity of the function
 $\left[\text{TWR}(\cdot)\right]^{1/N}$ (see Lemma~\ref{lem:mconcavity}). Uniqueness and concavity furthermore guarantee that
 the solution of \eqref{eq:TWRmax} can always be found numerically by simply following steepest ascent.

 Before we start our analysis, some more remarks on related papers are in order. \\
 In \cite{maier:optf2015} Maier--Paape showed that the fractional trading ansatz on one investment instrument leads to
 tremendous drawdowns, but that effect can be reduced largely when several stochastic independent trading systems are
 used coincidentally. Under which conditions this diversification effect works out in the here considered multivariate
 \TWRo situation is still an open question. Furthermore, several papers investigated risk measures in the context of
  fractional trading with one investment instrument
 ($M=1$; see \cite{prado:orb2013}, \cite{maier:eto2013},  \cite{maier:raft2016}  and \cite{vince:ips2013}).
 Related investigations for the multivariate \TWRo  using the drawdown can be found in Vince \cite{vince:rpm09}. \\

 In the following sections we now analyse the multivariate case of a discrete \TWRtext\!\!.
 That means we consider multiple investment strategies where every strategy generates multiple trading returns.
 As noted before this situation can be seen as a portfolio approach of a discrete \TWRtext \ (cf. \cite{vince:rpm09}).
 For example one could consider an investment strategy applied to several assets, the strategy producing trading returns on
 each asset. But in an even broader sense, one could also consider several distinct investment strategies applied to several
 distinct assets or even classes of assets.


      \section{Definition of a \TWRtext}  \label{sec:2}

\vspace{-4mm}
  The subject of consideration in this paper is the multivariate case of the discrete \TWRtext for several trading systems analogous to the definition of Ralph Vince
  in \cite{vince:rpm09}. For $1\leq k \leq M,\,M\in\N,$ we denote the $k$-th trading system by \system[k]. A trading system is an investment strategy applied to a
  financial instrument. Each system generates periodic trade returns, e.g. monthly, daily or the like. The absolute trade return of the $i$-th period of the $k$-th
  system is denoted by $t_{i,k}$, $1\leq i \leq N, 1\leq k\leq M$. Thus we have the joint return matrix

 \begin{align*}
 \begin{tabular}{c|cccc}
    period	 & \system[1] & \system[2]	& $\cdots$	& \system[M]  \\
  \hline
    $1$		 & $t_{1,1}$  & $t_{1,2}$	& $\cdots$	& $t_{1,M}$   \\
    $2$		 & $t_{2,1}$  & $t_{2,2}$	& $\cdots$	& $t_{2,M}$   \\
    $\vdots$ & $\vdots$	  & $\vdots$	& $\ddots$	& $\vdots$    \\
   $N$		 & $t_{N,1}$  & $t_{N,2}$	& $\cdots$	& $t_{N,M}$   \\
 \end{tabular}
 \end{align*}

 and define
 \begin{align}\label{eq:returnMatrix}\index{historical trading returns} 
    T:=\bigg(t_{i,k}\bigg)_{\substack{1\leq i \leq N\\ 1\leq k \leq M}}\in\R^{N\times M}.
 \end{align}

 Just as in the univariate case (cf. \cite{maier:eto2013} or \cite{vince:pmf90}), we assume that each system produced at least one loss
 within the $N$ periods. That means
 \begin{align}\label{eq:loss_hist} 
   \forall\, k\in\{1,\dots,M\}\,\exists\, i_0=i_0(k)\in\{1,\dots,N\} \text{ such that } t_{i_0,k}<0
 \end{align}

\noindent
 Thus we can define the biggest loss of each system as
 $$\that_k:=\max_{1\leq i \leq N}\{|t_{i,k}|\mid t_{i,k}<0\}>0,\quad1\leq k \leq M.$$ \index{biggest loss}

\noindent
 For better readability, we define the rows  of the given return matrix, i.e. the return of the $i$-th period, as
 $$\vekt:=(t_{i,1},\dots,t_{i,M})\in\R^{1\times M}$$
 and the vector of all biggest losses as
 $$\vek[\that]:=(\that_1,\dots,\that_M)\in\R^{1\times M}.$$
 Having the biggest loses at hand, it is possible to ``normalize'' the $k$--th column of $T$ by $1/\that_k$
 such that each system has a maximal loss of $-1$. Using the componentwise quotient, the normalized trade matrix return then
 has the rows
 $$\quot := \left(\frac{t_{i,1}}{\that_1},\dots,\frac{t_{i,M}}{\that_M}\right) \in \R^{1\times M},\quad 1\leq i\leq N\,.$$


 For $\vek:=(\f_1,\dots,\f_M)^\top$, $\f_k\in[0,1]$, we define the \acHPR of the $i$-th period as \index{Holding Period Return}
 \begin{align}\label{eq:HPRvec} 
   \HPR[i](\vek):=1+\sum\limits_{k=1}^M\f_k\frac{t_{i,k}}{\that_k}=1+\sprod{\quot^\top}{\vek}_{\R^M},
 \end{align}

 where $\sprod{\emptyvar}{\emptyvar}_{\R^M}$ denotes the standard scalar product on $\R^M$. To shorten the notation, the marking of the vector space $\R^M$ at
 the scalar product is omitted, if the dimension of the vectors is clear. Similar to the univariate case, the gain (or loss) in each system is scaled by its
 biggest loss. Therefore the $\HPR$ represents the gain (loss) of one period, when investing a fraction of $\nicefrac{\f_k}{\that_k}$ of the capital in
 \system[k] for all $1\leq k\leq M$, thus risking a maximal loss of $\f_k$ in the $k$-th trading system.\\

 The \acTWR as the gain (or loss) after the given $N$ periods, when the fraction $\f_k$ is invested in \system[k] over all periods, is then given as
 \begin{align}\label{eq:TWRvec}\index{Terminal Wealth Relative!discrete} 
   \begin{aligned}
      \TWR_N(\vek): &= \prod\limits_{i=1}^N\HPR[i](\vek)  \\
	    	        &= \prod\limits_{i=1}^N\left( 1+\sum\limits_{k=1}^M\f_k\frac{t_{i,k}}{\that_k}\right)=\prod\limits_{i=1}^N \left(1+\sprod{\quot^\top}{\vek}\right).
   \end{aligned}
 \end{align}

 Note that in the $M=1$--dimensional case a risk of a full loss of our capital corresponds to a fraction of $\f=1\in\R$. Here in the multivariate case we have a loss
 of $100\%$ of our capital every time there exists an $i_0\in\{1,\dots,N\}$ such that $\HPR[i_0](\vek)=0$. That is for example if we risk a maximal loss of
 $\f_{k_0}=1$ in the $k_0$-th trading system (for some $k_0\in\{1,\dots,M\}$) and simultaneously letting $\f_k=0$ for all other $k\in\{1,\dots,M\}$.
 However these {\textit degenerate} vectors of fractions are not the only examples that produce a \acTWR of zero. Since we would like to risk at most $100\%$ of our
 capital (which is quite a meaningful limitation), we restrict $\TWR_N:\msupp\to\R$ to the domain $\msupp$ given by the following definition:
\begin{definition}\label{def:msupp} 
   A vector of fractions $\vek\in\R_{\geq0}^M$ is called {\textit admissible} if $\vek\in\msupp$ holds, where
   \begin{align*}\index{admissible vector of fractions}
      \msupp:&=\{\vek\in\R_{\geq0}^M\mid\HPR[i](\vek)\geq0,\,\forall\,1\leq i\leq N\}\\
	         &=\{\vek\in\R_{\geq0}^M\mid\sprod{\quot^\top}{\vek}\geq-1,\,\forall\,1\leq i\leq N\}.
   \end{align*}
   Furthermore we define
   \begin{align*}
     \msuppR&:=\{\vek\in\msupp\mid\,\exists\,1\leq i_0\leq N\text{ s.t. }\HPR[i_0](\vek)=0\}.
   \end{align*}
\end{definition}

 With this definition we now have a risk of $100\%$ for each vector of fractions $\vek\in\msuppR$ and a risk of less than $100\%$ for each vector of fractions
 $\vek\in\msupp\setminus\msuppR$.
 Since
 $$\HPR[i](\vek[0])=1\quad\text{for all }1\leq i \leq N$$
 we can find an $\varepsilon>0$ such that
 $$\Lambda_\varepsilon:=\{\vek\in\R_{\geq0}^M\mid \|\vek\|\leq\varepsilon\}\subset\msupp,$$
 and thus in particular $\msupp\neq\emptyset$ holds. $\|\emptyvar\|=\sqrt{\sprod{\emptyvar}{\emptyvar}}$ denotes the Euclidean norm on $\R^M$.

 Observe that the $i$-th period results in a loss if $\HPR[i](\vek)<1$, that means $\sprod{\quot^\top}{\vek}=\HPR[i](\vek)-1<0$.
 Hence the biggest loss over all periods for an investment with a given vector of fractions $\vek\in\msupp$ is
 \begin{align}\label{eq:riskvec}\index{biggest loss} 
   \risk(\vek):=\max\left\{-\min\limits_{1\leq i\leq N}\{\sprod{\quot^\top}{\vek}\},0\right\}.
 \end{align}
 Consequently, we have a biggest loss of
 \begin{alignat*}{2}
   \risk(\vek)&=1\quad &&\forall\,\vek\in\msuppR
   \intertext{and}
   \risk(\vek)&\in[0,1)\quad\quad &&\forall \vek\in\msupp\setminus\msuppR.
 \end{alignat*}
 Note that for $\vek\in\msupp$ we do not have an a priori bound for the fractions $\f_k$, $k=1,\dots,M$. Thus it may happen that there are $\vek\in\msupp\setminus\msuppR$
 with $\f_k>1$ for some (or even for all) $k\in\{1,\dots,M\}$, or at least $\sum\limits_{k=1}^M \f_k > 1$, indicating a risk of more than $100\%$ for the individual trading
 systems, but the combined risk of all trading systems $\risk(\vek)$ can still be less than $100\%$. So the individual risks can potentially be eliminated to some extent
 through diversification. As a drawback of this favorable effect the optimization in the multivariate case may result in vectors of fractions $\vek\in\msupp$ that require
 a high capitalization of the individual trading systems. Thus we assume leveraged financial instruments and ignore margin calls or other regulatory issues.

 Before we continue with the \TWRo analysis, let us state a first auxiliary lemma for $\msupp$.

\begin{lemma}\label{lem:convexity} 
  The set $\msupp$ in Definition~\ref{def:msupp} is convex, as is $\msupp\setminus\msuppR$.
\end{lemma}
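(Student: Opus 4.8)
The plan is to prove convexity of $\msupp$ directly from its description as an intersection of half-spaces, and then handle $\msupp\setminus\msuppR$ by a separate argument accounting for the fact that points where some $\HPR$ vanishes must be excluded.

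First I would observe that $\msupp$ is the intersection of finitely many convex sets. Indeed, $\R_{\geq0}^M$ is convex, and for each fixed $i$ the set $\{\vek\in\R^M\mid\sprod{\quot^\top}{\vek}\geq-1\}$ is a closed half-space, hence convex. Since $\msupp$ is exactly the intersection of $\R_{\geq0}^M$ with these $N$ half-spaces (one for each $1\leq i\leq N$), and any intersection of convex sets is convex, convexity of $\msupp$ follows immediately. Concretely, for $\vek[1],\vek[2]\in\msupp$ and $\lambda\in[0,1]$, linearity of the scalar product gives $\sprod{\quot^\top}{\lambda\vek[1]+(1-\lambda)\vek[2]}=\lambda\sprod{\quot^\top}{\vek[1]}+(1-\lambda)\sprod{\quot^\top}{\vek[2]}\geq\lambda(-1)+(1-\lambda)(-1)=-1$, and clearly $\lambda\vek[1]+(1-\lambda)\vek[2]\in\R_{\geq0}^M$, so the convex combination lies in $\msupp$.

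For $\msupp\setminus\msuppR$, I would take $\vek[1],\vek[2]\in\msupp\setminus\msuppR$, so that $\HPR[i](\vek[1])>0$ and $\HPR[i](\vek[2])>0$ for \emph{all} $1\leq i\leq N$ (strict inequalities, since by definition none of the $\HPR$ values vanishes outside $\msuppR$). The key point is that $\HPR[i]$ is an affine function of $\vek$, so for each $i$ and $\lambda\in[0,1]$ one has $\HPR[i](\lambda\vek[1]+(1-\lambda)\vek[2])=\lambda\,\HPR[i](\vek[1])+(1-\lambda)\,\HPR[i](\vek[2])$, a convex combination of two strictly positive numbers, which is therefore strictly positive. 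Hence the convex combination lies again in $\msupp$ and satisfies $\HPR[i]>0$ for every $i$, meaning it does not belong to $\msuppR$; thus it lies in $\msupp\setminus\msuppR$.

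I do not expect a serious obstacle here, as both claims reduce to the affineness of $\HPR[i]$ together with the stability of strict and non-strict inequalities under convex combinations. The only point requiring a little care is the edge case $\lambda\in\{0,1\}$, where the convex combination coincides with one of the endpoints and the claim is trivial; for $\lambda\in(0,1)$ the strict positivity argument above applies verbatim. The fact that the second set is defined by strict rather than weak inequalities is what makes the argument work: a strict convex combination of strictly positive values stays strictly positive, whereas a weak inequality would not by itself exclude landing in $\msuppR$.
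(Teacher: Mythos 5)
Your proposal is correct and follows essentially the same route as the paper: $\msupp$ is convex as a finite intersection of (closed) half-spaces and the nonnegative orthant, and $\msupp\setminus\msuppR$ is handled by the same reasoning with the weak inequalities $\HPR[i](\vek)\geq 0$ replaced by strict ones. You merely spell out explicitly the ``similar reasoning'' that the paper leaves implicit for the second set, which is a harmless and arguably helpful elaboration.
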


\begin{proof}
	All the conditions $\f_k \ge 0,\, k=1,\dots,M$ and
	\begin{align*}
		\HPR[i](\vek)\geq0\quad\Leftrightarrow\quad\sprod{\quot^\top}{\vek}\geq-1,\quad i=1,\dots,N
	\end{align*}
	define half spaces (which are convex). Since $\msupp$ is the intersection of a finite set of half spaces, it is itself convex.
	
	A similar reasoning yields that $\msupp\setminus\msuppR$ is convex, too.
\end{proof}


\section{Optimal Fraction of the Discrete \TWRtext}   \label{sec:3}
\allowdisplaybreaks[1]

 If we develop this line of thought a little further a necessary condition for the return matrix $T$ for the optimization of the \TWRtext gets clear:

\begin{lemma}\index{risk free investment}
  Assume there is a vector $\vek_0\in \Lambda_\varepsilon$ with $\risk(\vek_0)=0$ then
  $$\{s\cdot\vek_0\mid s\in\R_{\geq0}\}\subset\msupp\setminus\msuppR.$$
  If in addition there is an $1\leq i_0\leq N$ such that $\HPR_{i_0}(\vek_0)>1$ then
  $$\TWR[N](s\cdot\vek_0)\xrightarrow[s\to\infty]{}\infty.$$
\end{lemma}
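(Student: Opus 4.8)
The plan is to unpack the hypothesis $\risk(\vek_0)=0$ into a uniform sign condition on the normalized returns and then exploit the fact that each $\HPR[i]$ is affine along the ray $s\mapsto s\,\vek_0$. By the definition of $\risk$ in \eqref{eq:riskvec}, the equation $\risk(\vek_0)=0$ is equivalent to $\min_{1\le i\le N}\sprod{\quot^\top}{\vek_0}\ge 0$, hence to $\sprod{\quot^\top}{\vek_0}\ge 0$ for every $1\le i\le N$, i.e. $\HPR[i](\vek_0)=1+\sprod{\quot^\top}{\vek_0}\ge 1$ for all $i$. This single observation drives both assertions.

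For the inclusion I would fix $s\ge 0$ and verify the two defining conditions of $\msupp$ for the point $s\,\vek_0$. Nonnegativity $s\,\vek_0\in\R_{\geq0}^M$ is immediate since $\vek_0\in\Lambda_\varepsilon\subset\R_{\geq0}^M$ and $s\ge0$. For the half-space conditions I would use linearity of the scalar product in $s$, namely $\sprod{\quot^\top}{s\,\vek_0}=s\,\sprod{\quot^\top}{\vek_0}\ge 0\ge -1$, so $s\,\vek_0\in\msupp$. Moreover each $\HPR[i](s\,\vek_0)=1+s\,\sprod{\quot^\top}{\vek_0}\ge 1>0$ is strictly positive, so no index $i_0$ with $\HPR[i_0](s\,\vek_0)=0$ can occur and therefore $s\,\vek_0\notin\msuppR$. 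Combining the two gives $s\,\vek_0\in\msupp\setminus\msuppR$ for every $s\ge0$, which is exactly the claimed ray inclusion.

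For the divergence statement I would add the extra hypothesis that $\HPR[i_0](\vek_0)>1$ for some $i_0$, i.e. $\sprod{\quot[i_0]^\top}{\vek_0}>0$. Since all factors of the product $\TWR[N](s\,\vek_0)=\prod_{i=1}^N\HPR[i](s\,\vek_0)$ are $\ge 1$ by the first part, I would estimate the entire product from below by the single distinguished factor, $\TWR[N](s\,\vek_0)\ge\HPR[i_0](s\,\vek_0)=1+s\,\sprod{\quot[i_0]^\top}{\vek_0}$, and let $s\to\infty$; the right-hand side grows linearly to $+\infty$, which forces $\TWR[N](s\,\vek_0)\to\infty$.

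I do not expect a genuine obstacle here: the lemma is essentially a direct consequence of the homogeneity of the scalar product in $s$ together with the correct reading of $\risk(\vek_0)=0$. The only points requiring mild care are the translation of the $\max\{\,\cdot\,,0\}$ in \eqref{eq:riskvec} into the uniform condition $\sprod{\quot^\top}{\vek_0}\ge 0$ for all $i$, and the observation that it is the \emph{strict} positivity of every $\HPR[i]$ along the ray (not merely nonnegativity) that rules out membership in $\msuppR$.
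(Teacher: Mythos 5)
Your proposal is correct and follows essentially the same route as the paper: unpack $\risk(\vek_0)=0$ into $\HPR[i](\vek_0)\ge 1$ for all $i$, use homogeneity of the scalar product along the ray $s\mapsto s\,\vek_0$ to get membership in $\msupp\setminus\msuppR$, and let the distinguished factor $\HPR[i_0](s\,\vek_0)$ drive the product to infinity. Your version is marginally more explicit than the paper's (you spell out the lower bound $\TWR[N](s\,\vek_0)\ge\HPR[i_0](s\,\vek_0)$, which the paper leaves implicit), but there is no substantive difference.
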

\begin{proof}
  If
  $$\risk(\vek_0)=\max\left\{-\min\limits_{1\leq i\leq N}\{\sprod{\quot^\top}{\vek_0}\},0\right\}=0,$$
  it follows that
  \begin{align}\label{eq:positive_HPR}
     \HPR[i](\vek_0)\geq1\quad\text{ for all }1\leq i \leq N.
  \end{align}
  For arbitrary $s\in\R_{\geq0}$ the function
  $$s\mapsto\HPR[i](s\vek_0)=1+\sprod{\quot^\top}{s\vek_0}=1+s\underbrace{\sprod{\quot^\top}{\vek_0}}_{\geq0}\geq1$$
  is monotonically increasing in $s$ for all $i=1,\dots,N$ and by that we have
  $$s\vek_0\in\msupp\setminus\msuppR.$$
  Moreover, if there is an $i_0$ with $\HPR_{i_0}(\vek_0)>1$ then
  $$\HPR_{i_0}(s\vek_0)\xrightarrow[s\to\infty]{}\infty$$
  and by that
  $$\TWR[N](s\cdot\vek_0)\xrightarrow[s\to\infty]{}\infty.$$
\end{proof}

 An investment where the holding period returns are greater than or equal to $1$ for all periods denotes a ``risk free'' investment ($r(\vek)=0$) and considering the possibility 
 of an unbounded leverage, it is clear that the overall profit can be maximized by investing an infinite quantity. Assuming arbitrage free investment instruments, any risk free 
 investment can only be of short duration, hence by increasing $N\in\N$ the condition $\HPR[i](\vek_0)\geq1$ will eventually burst, cf. (\ref{eq:positive_HPR}). 
 Thus, when optimizing the \TWRtext, we are interested in settings that fulfill the following assumption
 \begin{align*}
    \forall \, \vek\in \partial B_\varepsilon(0)\cap\Lambda_\varepsilon\,\,\exists\,i_0=i_0(\vek) \text{ such that } \sprod{\quot[i_0]^\top}{\vek}<0,
 \end{align*}
 always yielding $r(\vek)>0$.

 With that at hand, we can formulate the optimization problem for the multivariate discrete \TWRtext
 \begin{align}\label{prob:discm}\index{optimization problem!discrete}
   	\underset{\vek\in\msupp}{\text{maximize}}\quad\TWR[N](\vek)
  \end{align}
 and analyze the existence and uniqueness of an optimal vector of fractions for the problem under the assumption
 \begin{assumption}\label{discm}
	We assume that each of the trading systems in (\ref{eq:returnMatrix}) produced at least one loss (cf. (\ref{eq:loss_hist})) and furthermore
	\begin{Lalign}
		&\begin{aligned}
			&\forall \, \vek\in \partial B_\varepsilon(0)\cap\Lambda_\varepsilon\,\,\exists\,i_0=i_0(\vek)\in\{1,\dots,N\}\\[-1mm]
			& \text{such that }\sprod{\quot[i_0]^\top}{\vek}<0 \quad\quad\quad (\text{\bfseries no risk free investment})
		\end{aligned}\tag{a}{\label{asm1:discm}}\\[4mm]
		&\begin{aligned}
			&\frac{1}{N}\sum\limits_{i=1}^N t_{i,k}>0\quad\forall\,k=1,\dots,M   \quad(\text{\bfseries each trading system is profitable})
		\end{aligned}\tag{b}{\label{asm2:discm}}\\[4mm]
       &\ker(T)=\{\vek[0]\}\tag{c}{\label{asm3:discm}}   \quad\quad\quad\quad\quad  \quad \quad  \quad     (\text{\bfseries linear independent trading systems})
	\end{Lalign}
 \end{assumption}

 Assumption~\ref{discm}(\ref{asm1:discm}) ensures that, no matter how we allocate our portfolio (i.e. no matter what direction $\vek\in\msupp$ we choose), there is always at least one 
 period that realizes a loss, i.e. there exists an $i_0$ with $\HPR[i_0](\vek)<1$. Or in other words, not only are the investment systems all fraught with risk (cf. (\ref{eq:loss_hist})), 
 but there is also no possible risk free allocation of the systems.

 The matrix $T$ from (\ref{eq:returnMatrix}) can be viewed as a linear mapping
 \begin{align*}
    T:\R^M\to\R^M,
 \end{align*}
 ``$\ker(T)$'' denotes the kernel of the matrix $T$ in Assumption~\ref{discm}(\ref{asm3:discm}). Thus this assumption is the linear independence of the trading systems, i.e. the 
 linear independence of the columns
 $$\vek[t]_{\emptyvar k}\in\R^N,\quad k\in\{1,\dots,M\}$$
 of the matrix $T$. Hence with Assumption~\ref{discm}(\ref{asm3:discm}) it is not possible that there exists an
 $1\leq k_0\leq M$ and a $\vek[\psi]\in\R^M\setminus\{\vek[0]\}$ such that
 $$(-\psi_{k_0})\begin{pmatrix}t_{1,k_0}\\\vdots\\t_{N,k_0}\end{pmatrix}=\sum\limits_{\substack{k=1\\k\neq k_0}}^M\psi_k\begin{pmatrix}t_{1,k}\\\vdots\\t_{N,k}\end{pmatrix},$$
 which would make \system[$k_0$] obsolete. So Assumption~\ref{discm}(\ref{asm3:discm}) is no actual restriction of the optimization problem.

 Now we point out a first property of the \TWRtext.
\begin{lemma}\label{lem:mbounded1}
  Let the return matrix $T\in\R^{N\times M}$ (as in (\ref{eq:returnMatrix})) satisfy Assumption~\ref{discm}(\ref{asm1:discm}) then, for all $\vek\in\msupp\setminus\{\vek[0]\}$, 
  there exists an $s_0=s_0(\vek)>0$ such that $\TWR[N](s_0\vek)=0$. In fact $s_0\vek\in\msuppR$.
\end{lemma}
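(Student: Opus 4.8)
The plan is to exploit the homogeneity of the scalar product to reduce the statement to Assumption~\ref{discm}(\ref{asm1:discm}), and then to take $s_0$ as the \emph{first} scaling parameter at which one of the holding period returns vanishes. The starting observation is that along the ray $s \mapsto s\vek$ every factor
$$\HPR[i](s\vek) = 1 + s\,\sprod{\quot^\top}{\vek}, \qquad 1 \le i \le N,$$
is affine in $s$ with slope $c_i := \sprod{\quot^\top}{\vek}$. Since $\TWR[N](s\vek) = \prod_{i=1}^N \HPR[i](s\vek)$, the \TWRo vanishes precisely when at least one factor does, and a factor reaches zero at a \emph{positive} parameter $s = -1/c_i$ exactly when $c_i < 0$. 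Hence everything hinges on producing at least one strictly negative slope $c_i$.

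To obtain such an index I would normalize $\vek \in \msupp \setminus \{\vek[0]\} \subset \R_{\ge 0}^M$ to $\tilde{\vek} := \varepsilon\,\vek/\|\vek\|$, which lies in $\partial B_\varepsilon(0) \cap \Lambda_\varepsilon$ because its components remain nonnegative and its norm equals $\varepsilon$. Assumption~\ref{discm}(\ref{asm1:discm}) then yields an index $i_0$ with $\sprod{\quot[i_0]^\top}{\tilde{\vek}} < 0$, and by positive homogeneity of the scalar product $\sprod{\quot[i_0]^\top}{\vek} < 0$ as well. Thus the set of loss periods $I^- := \{\,i : c_i < 0\,\}$ is non-empty.

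Finally I would set
$$s_0 := \min_{i \in I^-} \frac{1}{|c_i|} > 0,$$
which is well defined since $I^-$ is non-empty and finite, and check the two claims. For a minimizing index $i_*$ one has $\HPR[i_*](s_0\vek) = 1 + s_0 c_{i_*} = 0$, so $\TWR[N](s_0\vek) = 0$. Every remaining factor stays nonnegative: if $c_i \ge 0$ then $\HPR[i](s_0\vek) \ge 1 > 0$, while for $i \in I^-$ the bound $s_0 \le 1/|c_i|$ gives $s_0 c_i \ge -1$, hence $\HPR[i](s_0\vek) \ge 0$. Therefore $s_0\vek \in \msupp$ and, because a factor vanishes, $s_0\vek \in \msuppR$, as required.

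I expect the first reduction to be the main point to get right: Assumption~\ref{discm}(\ref{asm1:discm}) is formulated only for directions on the sphere $\partial B_\varepsilon(0) \cap \Lambda_\varepsilon$, so the argument depends essentially on the scale invariance of the sign of $\sprod{\quot^\top}{\vek}$ to transport it to an arbitrary nonzero admissible $\vek$. Once a negative slope is secured, the genuinely necessary choice is to take the \emph{smallest} zero-crossing $s_0$ rather than any zero-crossing: this keeps $s_0\vek$ inside $\msupp$ while still forcing one factor to be zero.
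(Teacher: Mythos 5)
Your proposal is correct and follows essentially the same route as the paper: normalize $\vek$ onto $\partial B_\varepsilon(0)\cap\Lambda_\varepsilon$, invoke Assumption~\ref{discm}(\ref{asm1:discm}) together with the homogeneity of the sign of $\sprod{\quot^\top}{\vek}$ to obtain a strictly negative slope, and then scale to the first zero-crossing. Your $s_0=\min_{i\in I^-}1/|c_i|$ is literally the paper's choice $s_0=-1/\sprod{\quot[j_0]^\top}{\vek}$ with $j_0$ the argmin of the slopes, so the two arguments coincide.
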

\begin{proof}
  For some arbitrary $\vek\in\msupp\setminus\{0\}$ we have $\frac{\varepsilon}{\|\vek\|}\cdot\vek\in \partial B_\varepsilon(0)\cap\Lambda_\varepsilon$. 
  Then Assumption~\ref{discm}(\ref{asm1:discm}) yields the existence of an
  $i_0\in\{1,\dots,N\}$ with $\sprod{\quot[i_0]^\top}{\vek}<0$. With
  $$j_0:=\operatorname*{argmin}\limits_{1\leq i\leq N}\{\sprod{\quot^\top}{\vek}\}\in\{1,\dots,N\}$$
  and
  $$s_0:=-\frac{1}{\sprod{\quot[j_0]^\top}{\vek}}>0$$
  we get that
  $$\HPR[j_0](s_0\vek)=1+\sprod{\quot[j_0]^\top}{s_0\vek}=1+s_0\sprod{\quot[j_0]^\top}{\vek}=0$$
  and $\HPR[i](s_0\vek)\geq0$ for all $i\neq j_0$. Hence $\TWR[N](s_0\vek)=0$ and clearly $s_o\vek\in\msuppR$ (cf.~Definition~\ref{def:msupp}).
\end{proof}

\noindent 
 Furthermore the following holds.
\begin{lemma}\label{lem:mboundedsupport}
  Let the return matrix $T\in\R^{N\times M}$ (as in (\ref{eq:returnMatrix})) satisfy Assumption~\ref{discm}(\ref{asm1:discm}) then the set $\msupp$ is compact.
\end{lemma}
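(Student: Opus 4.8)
The plan is to invoke the Heine--Borel theorem and to show that $\msupp\subset\R^M$ is both closed and bounded. Closedness is immediate and uses no assumption: exactly as in the proof of Lemma~\ref{lem:convexity}, the set $\msupp$ is a finite intersection of the closed half-spaces $\{\f_k\geq0\}$, $k=1,\dots,M$, and $\{\sprod{\quot^\top}{\vek}\geq-1\}$, $1\le i\le N$, and a finite intersection of closed sets is closed. The entire content of the lemma therefore lies in the boundedness of $\msupp$, and this is precisely where Assumption~\ref{discm}(\ref{asm1:discm}) must enter.

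To establish boundedness I would first upgrade the pointwise ``no risk free investment'' condition to a uniform one. Consider the function
$$g(\vek):=\min_{1\leq i\leq N}\sprod{\quot^\top}{\vek},$$
which, as a minimum of finitely many linear (hence continuous) functionals, is continuous on $\R^M$. The set $K:=\partial B_\varepsilon(0)\cap\Lambda_\varepsilon$ is closed, bounded and nonempty, hence compact, and Assumption~\ref{discm}(\ref{asm1:discm}) states exactly that $g(\vek)<0$ for every $\vek\in K$. Since a continuous function attains its maximum on a nonempty compact set, there is a $\delta>0$ with $g(\vek)\leq-\delta$ for all $\vek\in K$; that is, on the whole $\varepsilon$-sphere section there is a uniform negative margin.

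The bound then follows by exploiting the positive homogeneity of the functionals $\vek\mapsto\sprod{\quot^\top}{\vek}$. For any $\vek\in\msupp\setminus\{\vek[0]\}$ the rescaled vector $u:=\frac{\varepsilon}{\|\vek\|}\vek$ lies in $K$, so the minimizing index $i_0$ satisfies $\sprod{\quot[i_0]^\top}{u}\leq-\delta$. By linearity $\sprod{\quot[i_0]^\top}{\vek}=\frac{\|\vek\|}{\varepsilon}\sprod{\quot[i_0]^\top}{u}\leq-\frac{\delta}{\varepsilon}\|\vek\|$, while admissibility of $\vek$ forces $\sprod{\quot[i_0]^\top}{\vek}\geq-1$. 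Combining the two inequalities yields $\|\vek\|\leq\varepsilon/\delta$, so $\msupp$ is contained in the ball of radius $\varepsilon/\delta$ and is bounded. Together with closedness this proves compactness.

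The main obstacle is the second step: Assumption~\ref{discm}(\ref{asm1:discm}) only guarantees a loss index $i_0$ for each individual boundary point, and a priori the corresponding slopes $\sprod{\quot[i_0]^\top}{\vek}$ could degenerate to $0$ as $\vek$ varies over $K$, which would destroy any uniform bound. Compactness of $K$ is exactly what rules this out and produces the single $\delta>0$; the homogeneity of the linear $\HPR$ functionals then transports this margin from the $\varepsilon$-sphere to all of $\msupp$. Equivalently one could argue by contradiction: if $\vek^{(n)}\in\msupp$ had $\|\vek^{(n)}\|\to\infty$, the rescaled points $\frac{\varepsilon}{\|\vek^{(n)}\|}\vek^{(n)}\in K$ would subconverge to some $u\in K$; by Assumption~\ref{discm}(\ref{asm1:discm}) there is an $i_0$ with $\sprod{\quot[i_0]^\top}{u}<0$, and continuity together with homogeneity would then force $\sprod{\quot[i_0]^\top}{\vek^{(n)}}\to-\infty$, contradicting $\sprod{\quot[i_0]^\top}{\vek^{(n)}}\geq-1$.
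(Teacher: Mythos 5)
Your proof is correct and rests on the same key mechanism as the paper's: by Assumption~\ref{discm}(\ref{asm1:discm}) the continuous function $\vek\mapsto\min_{1\le i\le N}\sprod{\quot^\top}{\vek}$ is strictly negative on the compact set $\partial B_\varepsilon(0)\cap\Lambda_\varepsilon$, hence bounded away from zero there, and positive homogeneity transports this uniform margin to all of $\msupp$. The difference is only in packaging: the paper exhibits $\msuppR$ as the continuous image of that compact set under $\vek\mapsto\vek/|m(\vek)|$ and leaves both the closedness of $\msupp$ and the passage from ``$\msuppR$ bounded'' to ``$\msupp$ bounded'' implicit, whereas you derive the explicit radius $\varepsilon/\delta$ directly and verify closedness via the half-space description --- a slightly more complete write-up of essentially the same argument.
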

\begin{proof}
  For all $\vek\in\partial B_\varepsilon(0)\cap\Lambda_\varepsilon$ Assumption~\ref{discm}(\ref{asm1:discm}) yields an $i_0(\vek)\in\{1,\dots,N\}$ such that
  $\sprod{\quot[i_0]^\top}{\vek}<0$. With that we define
  $$m:\partial B_\varepsilon(0)\cap\Lambda_\varepsilon\to\R,\vek\mapsto m(\vek):=\min\limits_{1\leq i\leq N}\{\sprod{\quot^\top}{\vek}\}<0.$$
  This function is continuous on the compact support $\partial B_\varepsilon(0)\cap\Lambda_\varepsilon$. Thus the maximum exists
  $$M:=\max\limits_{\vek\in \partial B_\varepsilon(0)\cap\Lambda_\varepsilon}m(\vek)<0.$$
  Consequently the function
  $$g:\partial B_\varepsilon(0)\cap\Lambda_\varepsilon\to\R_{\geq0}^M,\vek\mapsto\frac{1}{|m(\vek)|}\cdot\vek$$
  is well defined and continuous.
  Since for all $\vek\in\partial B_\varepsilon(0)\cap\Lambda_\varepsilon$
  $$\sprod{\quot^\top}{\frac{1}{|m(\vek)|}\vek}=\frac{\sprod{\quot^\top}{\vek}}{|\min\limits_{1\leq i\leq N}\{\sprod{\quot^\top}{\vek}\}|}\geq-1\quad\forall\,1\leq i \leq N$$
  with equality for at least one index $\tilde i_0\in\{1,\dots,N\}$, we have
  $$\HPR_i\left(\frac{1}{|m(\vek)|}\vek\right)\geq0\quad\forall 1\leq i\leq N$$
  and
  $$\HPR_{\tilde i_0}\left(\frac{1}{|m(\vek)|}\vek\right)=0,$$
  hence
  $$\frac{1}{|m(\vek)|}\vek\in\msuppR.$$
  Altogether we see that
  \begin{align*}
     g\left(\partial B_\varepsilon(0)\cap\Lambda_\varepsilon\right)&=\left\lbrace\frac{1}{|m(\vek)|}\cdot\vek\mid \vek\in\partial B_\varepsilon(0)\cap\Lambda_\varepsilon\right\rbrace=\msuppR,\\
  \end{align*}
  thus the set $\msuppR$ is bounded and connected as image of the compact set $\partial B_\varepsilon \cap \Lambda_\varepsilon$ under the continuous function $g$ and by that  
  the set $\msupp$ is compact.
\end{proof}

\noindent 
 Now we take a closer look at the third assumption for the optimization problem.
\begin{lemma}\label{lem:mconcavity}	
  Let the return matrix $T\in\R^{N\times M}$ (as in (\ref{eq:returnMatrix})) satisfy Assumption~\ref{discm}(\ref{asm3:discm}) then $\TWR[N]^{\nicefrac{1}{N}}$ is concave on $\msupp\setminus\msuppR$. Moreover if there is a $\vek_0\in\msupp\setminus\msuppR$ with $\nabla\TWR[N](\vek)=\vek[0]$, then
  $\TWR[N]^{\nicefrac{1}{N}}$ is even strictly concave in $\vek_0$.
\end{lemma}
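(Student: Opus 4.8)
The plan is to restrict attention to $\msupp\setminus\msuppR$ (a convex set by Lemma~\ref{lem:convexity}), on which every factor $\HPR[i](\vek)>0$, so that $f:=\TWR[N]^{\nicefrac{1}{N}}=\rk{\prod_{i=1}^N\HPR[i]}^{\nicefrac{1}{N}}$ is smooth and strictly positive, and to prove concavity by showing that the Hessian of $f$ is negative semidefinite there. Conceptually $f$ is the geometric mean of the affine functions $\HPR[i]$, and the geometric mean is concave on the positive orthant; the computation below makes this quantitative and also delivers the strict statement. Abbreviating $a_i:=\quot^\top\in\R^M$, each factor $\HPR[i](\vek)=1+\sprod{a_i}{\vek}$ is affine with constant gradient $a_i$ and vanishing Hessian, so for the logarithm $L:=\ln f=\frac{1}{N}\sum_{i=1}^N\ln\HPR[i]$ one finds
\begin{align*}
  \nabla L=\frac{1}{N}\sum_{i=1}^N\frac{a_i}{\HPR[i]},\qquad\Hess{\vek}L=-\frac{1}{N}\sum_{i=1}^N\frac{a_ia_i^\top}{\HPR[i]^2}.
\end{align*}

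Since $f=e^{L}$, differentiating twice gives $\Hess{\vek}f=f\rk{\nabla L\,(\nabla L)^\top+\Hess{\vek}L}$, and it remains to check that the bracketed matrix is negative semidefinite. For a test direction $\vek[v]\in\R^M$ put $b_i:=\sprod{a_i}{\vek[v]}/\HPR[i]$; the corresponding quadratic form evaluates to
\begin{align*}
  \vek[v]^\top\rk{\nabla L\,(\nabla L)^\top+\Hess{\vek}L}\vek[v]=\rk{\frac{1}{N}\sum_{i=1}^N b_i}^2-\frac{1}{N}\sum_{i=1}^N b_i^2\leq0,
\end{align*}
the inequality being exactly Cauchy--Schwarz applied to $(b_i)_i$ and $(1,\dots,1)$. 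Hence $\Hess{\vek}f\preceq0$ and $f$ is concave on $\msupp\setminus\msuppR$. The delicate point worth stressing is that concavity of $f$ does \emph{not} reduce to concavity of $\ln f$: although $\Hess{\vek}L\preceq0$ already makes $L$ concave for free, exponentiating reintroduces the positive rank-one term $\nabla L\,(\nabla L)^\top$, which could a priori spoil the sign. Taming this term is precisely what the Cauchy--Schwarz step accomplishes, so that estimate --- rather than the trivially signed logarithmic Hessian --- is the heart of the argument.

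For the strict statement, observe that $\nabla\TWR[N]=N\,\TWR[N]\,\nabla L$ with $\TWR[N]>0$, so the hypothesis $\nabla\TWR[N](\vek_0)=\vek[0]$ forces $\nabla L(\vek_0)=\vek[0]$. At $\vek_0$ the rank-one term therefore vanishes and
\begin{align*}
  \Hess{\vek}f(\vek_0)=f(\vek_0)\,\Hess{\vek}L(\vek_0)=-\frac{f(\vek_0)}{N}\sum_{i=1}^N\frac{a_ia_i^\top}{\HPR[i](\vek_0)^2}.
\end{align*}
This matrix is negative definite unless some $\vek[v]\neq\vek[0]$ satisfies $\sprod{a_i}{\vek[v]}=0$ for all $i$. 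Setting $w_k:=v_k/\that_k$ rewrites this as $\rk{T\vek[w]}_i=0$ for every $i$, i.e. $\vek[w]\in\ker(T)$, and Assumption~\ref{discm}(\ref{asm3:discm}) gives $\ker(T)=\gk{\vek[0]}$, hence $\vek[w]=\vek[0]$ and, since each $\that_k>0$, also $\vek[v]=\vek[0]$ --- a contradiction. Thus $\Hess{\vek}f(\vek_0)\prec0$ and $f$ is strictly concave at $\vek_0$.

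The main obstacle is thus the concavity part rather than the strict refinement: one must resist the temptation to deduce concavity of $f$ from the (obviously valid) concavity of its logarithm, and instead control the extra rank-one contribution via Cauchy--Schwarz. The strict conclusion then rests entirely on translating the degeneracy condition $\sprod{a_i}{\vek[v]}=0\ (\forall i)$ into triviality of $\ker(T)$, which is exactly Assumption~\ref{discm}(\ref{asm3:discm}).
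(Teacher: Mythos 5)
Your proof is correct and takes essentially the same route as the paper: after unwinding the logarithm you arrive at exactly the paper's Hessian, namely $\TWR[N]^{\nicefrac{1}{N}}(\vek)\bigl(\nabla L\,(\nabla L)^\top+\Hess{\vek}L\bigr)=-\tfrac{1}{N}\TWR[N]^{\nicefrac{1}{N}}(\vek)\,B(\vek)$, and you invoke Assumption~\ref{discm}(\ref{asm3:discm}) in the identical way to exclude degenerate directions once the rank-one term vanishes at a critical point. The only difference is cosmetic: you verify negative semidefiniteness of the quadratic form via Cauchy--Schwarz, whereas the paper completes the square by centering the vectors $\vek[y]_i$ into $\vek[w]_i$; both are the same variance inequality.
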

\begin{proof}
  For $\vek\in\msupp\setminus\msuppR$ the gradient of $\TWR[N]^{\nicefrac{1}{N}}$ is given by the column vector
  \begin{align}
		&\nabla\TWR[N]^{\nicefrac{1}{N}}(\vek)\notag\\
			&=\TWR[N]^{\nicefrac{1}{N}}(\vek)\cdot\frac{1}{N}\sum\limits_{i=1}^N\frac{1}{1+\sum\limits_{k=1}^M\f_k\frac{t_{i,k}}{\that_k}}\cdot
					\begin{pmatrix}\nicefrac{t_{i,1}}{\that_1}\\\nicefrac{t_{i,2}}{\that_2}\\\vdots\\\nicefrac{t_{i,M}}{\that_M}\end{pmatrix}\notag\\
				&=\TWR[N]^{\nicefrac{1}{N}}(\vek)\cdot\frac{1}{N}\sum\limits_{i=1}^N\frac{1}{1+\sprod{\quot^\top}{\vek}}\cdot\quot^\top\in\R^{M\times1},\label{eq:grad_TWRdiscm}
  \end{align}
  where $\TWR[N]^{\nicefrac{1}{N}}(\vek)>0$. The Hessian-matrix is then given by
  \begin{align*}
		&\Hess{\TWR[N]^{\nicefrac{1}{N}}}(\vek)\\
		      &=\nabla\left[\left(\nabla\TWR[N]^{\nicefrac{1}{N}}(\vek)\right)^\top\right]\\
		      &=\nabla \left[\TWR[N]^{\nicefrac{1}{N}}(\vek)\cdot\frac{1}{N}\sum\limits_{i=1}^N\frac{1}{1+\sprod{\quot^\top}{\vek}}\quot\right]\\
		      &=\nabla\TWR[N]^{\nicefrac{1}{N}}(\vek)\cdot\frac{1}{N}\sum\limits_{i=1}^N\frac{1}{1+\sprod{\quot^\top}{\vek}}\quot\\
			&\omit\hfill $\displaystyle+\TWR[N]^{\nicefrac{1}{N}}(\vek)\frac{1}{N}\sum\limits_{i=1}^N\left(-\frac{1}{(1+\sprod{\quot^\top}{\vek})^2}\quot^\top\cdot\quot\right)$\\
					&=\TWR[N]^{\nicefrac{1}{N}}(\vek)\Bigg[\underbrace{\frac{1}{N^2}\sum\limits_{i=1}^N \vek[y]_i^\top\sum\limits_{i=1}^N \vek[y]_i
			-\frac{1}{N}\sum\limits_{i=1}^N \vek[y]_i^\top \vek[y]_i}_{=:-\nicefrac{1}{N}\cdot B(\vek)\in\R^{M\times M}}\Bigg]
  \end{align*}
  where $\vek[y]_i:=\frac{1}{1+\sprod{\quot^\top}{\vek}}\quot\in\R^{1\times M}$ is a row vector. The matrix $B(\vek)$ can be rearranged as
  \begin{align*}
		B(\vek)
			&=\sum\limits_{i=1}^N \vek[y]_i^\top \vek[y]_i-\frac{1}{N}\left(\sum\limits_{i=1}^N \vek[y]_i^\top\right)\left(\sum\limits_{i=1}^N \vek[y]_i\right)\\
			&=\sum\limits_{i=1}^N \vek[y]_i^\top \vek[y]_i-\frac{1}{N}\left[\sum\limits_{i=1}^N \vek[y]_i^\top\left(\sum\limits_{u=1}^N \vek[y]_u\right)\right]
			      -\frac{1}{N}\left[\sum\limits_{i=1}^N\left(\sum\limits_{v=1}^N \vek[y]_v^\top \right)\vek[y]_i\right]\\
			    &\quad+\frac{1}{N^2}\left(\sum\limits_{i=1}^N 1\right)\left(\sum\limits_{v=1}^N \vek[y]_v^\top\right)\left(\sum\limits_{u=1}^N \vek[y]_u\right)\\
			&=\sum\limits_{i=1}^N\left[\vek[y]_i^\top \vek[y]_i-\vek[y]_i^\top\frac{1}{N}\left(\sum\limits_{u=1}^N \vek[y]_u\right)-\frac{1}{N}\left(\sum\limits_{v=1}^N 
                  \vek[y]_v^\top\right)\vek[y]_i\right.\\
			    &\quad\left.+\frac{1}{N^2}\left(\sum\limits_{v=1}^N \vek[y]_v^\top\right)\left(\sum\limits_{u=1}^N \vek[y]_u\right)\right]\\
			&=\sum\limits_{i=1}^N\left[\vek[y]_i^\top\left(\vek[y]_i-\frac{1}{N}\sum\limits_{u=1}^N \vek[y]_u\right)
			      -\frac{1}{N}\left(\sum\limits_{v=1}^N \vek[y]_v^\top\right)\left(\vek[y]_i-\frac{1}{N}\sum\limits_{u=1}^N \vek[y]_u\right)\right]\\
			&=\sum\limits_{i=1}^N\Bigg[\left(\vek[y]_i^\top-\frac{1}{N}\sum\limits_{v=1}^N \vek[y]_v^\top\right)
			    \Bigg(\underbrace{\vek[y]_i-\frac{1}{N}\sum\limits_{u=1}^N \vek[y]_u}_{:=\vek[w]_i\in\R^{1\times M}}\Bigg)\Bigg]\\
			&=\sum\limits_{i=1}^N \vek[w]_i^\top \vek[w]_i.
  \end{align*}
  Since the matrices $\vek[w]_i^\top \vek[w]_i$ are positive semi-definite for all $i=1,\dots,N$, the same holds for $B(\vek)$ and therefore $\TWR[N]^{\nicefrac{1}{N}}$ is concave.
  Furthermore if there is a $\vek_0\in\msupp\setminus\msuppR$ with
  \begin{alignat*}{2}
			&&\nabla\TWR[N](\vek_0)&=0\\
	 &\stackrel{\TWR[N](\vek_0)>0}{\Leftrightarrow}&\quad\sum\limits_{i=1}^N\frac{1}{1+\sprod{\quot^\top}{\vek_0}}\quot&=0\\
	 &\stackrel{\phantom{\TWR[N](\f_0)>0}}{\Leftrightarrow}&\quad\sum\limits_{i=1}^N \vek[y]_i &=0,
  \end{alignat*}
  where $\vek[y]_i=\vek[y]_i(\vek_0)$,	the matrix $B(\vek_0)$ further reduces to
  \begin{align*}
	  B(\vek_0)=\sum\limits_{i=1}^N \vek[y]_i^\top \vek[y]_i.
  \end{align*}
  If $B(\vek_0)$ is not strictly positive definite there is a $\vek[\psi]=(\psi_1,\dots,\psi_M)^\top\in\R^M\setminus\{\vek[0]\}$ such that
  \begin{align*}
	  0=\vek[\psi]^\top B(\vek_0)\vek[\psi]=\sum\limits_{i=1}^N\vek[\psi]^\top \vek[y]_i^\top \vek[y]_i \vek[\psi]=\sum\limits_{i=1}^N\underbrace{\sprod{\vek[y]_i^\top}{\vek[\psi]}^2}_{\geq0}
  \end{align*}
  and we get that
  \begin{alignat*}{2}
	 &\phantom{\Leftrightarrow}\quad&\sprod{\vek[y]_i^\top}{\vek[\psi]}=\frac{1}{1+\sprod{\quot^\top}{\vek_0}}\sprod{\quot^\top}{\vek[\psi]}&=0\quad\forall\,1\leq i \leq N\\
	 &\Leftrightarrow&\sprod{\quot^\top}{\vek[\psi]}&=0\quad\forall\,1\leq i \leq N,
  \end{alignat*}
  yielding a non trivial element in $\ker(T)$ and thus contradicting Assumption~\ref{discm}(\ref{asm3:discm}). Hence matrix $B(\vek_0)$ is strictly positive definite and $\TWR[N]^{\nicefrac{1}{N}}$ is strictly concave in $\vek_0$.
\end{proof}

\noindent 
 With this at hand we can state an existence and uniqueness result for the multivariate optimization problem.
\begin{theorem}\label{thm:optf_discm}\index{existence and uniqueness result!discrete}\sloppy 
  ({\bfseries optimal $f$ existence}) Given a return matrix $T=\bigg(t_{i,k}\bigg)_{\substack{1\leq i \leq N\\ 1\leq k \leq M}}$ as in (\ref{eq:returnMatrix}) that fulfills 
  Assumption~\ref{discm}, then there exists a solution $\vek^{opt}_N\in\msupp$ of the optimization problem~(\ref{prob:discm})
  \begin{align}\label{eq:thm36} 
	  \underset{\vek\in\msupp}{\text{maximize}}\quad\TWR[N](\vek).
  \end{align}
  Furthermore one of the following statements holds:
  \begin{enumerate}
		\item  $\vek^{opt}_N$ is unique, or
		\item $\vek^{opt}_N\in\partial \msupp$.
  \end{enumerate}\fussy
  For both cases $\vek^{opt}_N\neq0$, $\vek^{opt}_N\notin\msuppR$ and ${\TWR[N](\vek^{opt}_N)>1}$ hold true.
\end{theorem}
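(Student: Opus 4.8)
The plan is to get existence from compactness, to pin down the qualitative location of the maximizer via the profitability Assumption~\ref{discm}(\ref{asm2:discm}), and to obtain the uniqueness/boundary dichotomy from the (strict) concavity established in Lemma~\ref{lem:mconcavity}. For existence, recall that Lemma~\ref{lem:mboundedsupport} makes $\msupp$ compact and that $\TWR[N]$ is continuous there, so the Weierstrass theorem immediately provides a maximizer $\vek^{opt}_N\in\msupp$ of (\ref{eq:thm36}).

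Next I would establish the three assertions claimed for both cases. Since $\TWR[N](\vek[0])=1$, it suffices to exhibit a point of $\msupp$ with $\TWR[N]>1$. Evaluating the gradient formula (\ref{eq:grad_TWRdiscm}) at the origin gives
\begin{align*}
  \nabla\TWR[N]^{\nicefrac{1}{N}}(\vek[0])=\frac{1}{N}\sum_{i=1}^N\quot^\top,
\end{align*}
whose $k$-th entry equals $\frac{1}{N\that_k}\sum_{i=1}^N t_{i,k}>0$ by Assumption~\ref{discm}(\ref{asm2:discm}); hence this gradient lies in $\R_{>0}^M$. Small positive multiples of it stay in $\Lambda_\varepsilon\subset\msupp$, and the directional derivative of $\TWR[N]^{\nicefrac{1}{N}}$ at $\vek[0]$ along this direction is its squared Euclidean norm, which is positive. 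Therefore $\TWR[N](\vek^{opt}_N)>1$, which forces $\vek^{opt}_N\neq\vek[0]$ and, since $\TWR[N]\equiv0$ on $\msuppR$, also $\vek^{opt}_N\notin\msuppR$; thus $\vek^{opt}_N\in\msupp\setminus\msuppR$.

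For the dichotomy I would split on the position of $\vek^{opt}_N$. If $\vek^{opt}_N\in\partial\msupp$, alternative (b) holds and nothing more is needed. Otherwise $\vek^{opt}_N\in\overset{\circ}{\msupp}$, so the necessary first-order condition yields $\nabla\TWR[N](\vek^{opt}_N)=\vek[0]$, and the second part of Lemma~\ref{lem:mconcavity} makes $\TWR[N]^{\nicefrac{1}{N}}$ strictly concave at $\vek^{opt}_N$. To deduce uniqueness, suppose $\vek'\in\msupp$ were a second maximizer. As in the previous step $\vek'\in\msupp\setminus\msuppR$, and by Lemma~\ref{lem:convexity} the whole segment $[\vek^{opt}_N,\vek']$ lies in the convex set $\msupp\setminus\msuppR$ on which $\TWR[N]^{\nicefrac{1}{N}}$ is concave. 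Concavity together with equal endpoint values forces $\TWR[N]^{\nicefrac{1}{N}}$ to be constant along this segment, which contradicts strict concavity at the interior critical point $\vek^{opt}_N$ unless $\vek'=\vek^{opt}_N$. Hence alternative (a) holds.

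The step I expect to be delicate is this last one: upgrading the pointwise strict concavity of Lemma~\ref{lem:mconcavity}, which only controls the Hessian at a single critical point, into a global uniqueness statement. The clean device is to pass to the one-dimensional slice $h(\tau):=\TWR[N]^{\nicefrac{1}{N}}(\vek^{opt}_N+\tau(\vek'-\vek^{opt}_N))$: global concavity with $h(0)=h(1)$ forces $h$ to be constant, whereas $h'(0)=0$ and $h''(0)<0$ (the latter from strict concavity) make $h$ strictly decreasing near $0$, a contradiction. One minor point to record is that the vanishing of $\nabla\TWR[N]$ and of $\nabla\TWR[N]^{\nicefrac{1}{N}}$ are equivalent at $\vek^{opt}_N$, since the two gradients differ only by the positive factor $\TWR[N]^{\nicefrac{1}{N}}(\vek^{opt}_N)$ — precisely the equivalence already exploited inside Lemma~\ref{lem:mconcavity}.
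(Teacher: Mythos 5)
Your proposal is correct and follows essentially the same route as the paper: existence via compactness of $\msupp$ (Lemma~\ref{lem:mboundedsupport}) and continuity, the strictly positive gradient at $\vek[0]$ from Assumption~\ref{discm}(\ref{asm2:discm}) to rule out $\vek[0]$ and $\msuppR$ and force $\TWR[N](\vek^{opt}_N)>1$, and the interior/boundary case split with uniqueness in the interior case obtained from concavity on $\msupp\setminus\msuppR$ plus strict concavity at the critical point (Lemmas~\ref{lem:convexity} and~\ref{lem:mconcavity}). Your one-dimensional slice argument is just a slightly more explicit rendering of the paper's ``all points of $L$ must be maxima, contradicting strict concavity at $\vek_0$'' step.
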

\begin{proof}
  We show existence and partly uniqueness of a maximum of the $N$-th root of $\TWR[N]$, yielding existence and partly uniqueness of a solution $\vek^{opt}_N$ 
  of (\ref{eq:thm36}) with the claimed properties.\\
  With Lemma~\ref{lem:convexity} and Lemma~\ref{lem:mboundedsupport}, the support $\msupp$ of the \TWRtext is convex and compact. Hence the
  continuous function $\TWR[N]^{\nicefrac{1}{N}}$ attains its maximum on $\msupp$.
\sloppy
  For $\vek=\vek[0]$ we get from (\ref{eq:grad_TWRdiscm})
  \begin{align*}
	  \nabla\TWR[N]^{\nicefrac{1}{N}}(\vek[0])&=\underbrace{\TWR[N]^{\nicefrac{1}{N}}(\vek[0])}_{=1}\cdot\frac{1}{N}\sum\limits_{i=1}^N\quot^\top,
  \end{align*}
  which is a vector whose components are strictly positive due to Assumption~\ref{discm}(\ref{asm2:discm}). Therefore $\vek[0]\in\msupp$ is not a maximum of 
  $\TWR[N]^{\nicefrac{1}{N}}$	and a global maximum reaches a value greater than
  $$\TWR[N]^{\nicefrac{1}{N}}(\vek[0])=1.$$
  Since for all $\vek\in\msuppR$
  $$\TWR[N]^{\nicefrac{1}{N}}(\vek)=0$$
  holds, a maximum can not be attained in $\msuppR$ either.
	
  Now if there is a maximum on $\partial\msupp$, assertion (b) holds together with the claimed properties. Alternatively, a maximum $\vek_0$ is attained in the interior 
  $\mathring{\msupp}$. In this case, Lemma~\ref{lem:mconcavity} yields the strict concavity of $\TWR[N]^{\nicefrac{1}{N}}$ at $\vek_0$. Suppose there is another maximum $\vek^{\ast}\in\msupp\setminus\msuppR$ then the straight line connecting both maxima
  \begin{align*}
	 L:=\{t\cdot\vek_0+(1-t)\cdot\vek^{\ast}\mid \,t\in[0,1]\}
  \end{align*}
  is fully contained in the convex set $\msupp\setminus\msuppR$ (cf. Lemma~\ref{lem:convexity}).  Because of the concavity of $\TWR[N]^{\nicefrac{1}{N}}$ all points of $L$ have 
  to be maxima, which is a contradiction to the strict concavity of $\TWR[N]^{\nicefrac{1}{N}}$  in $\vek_0$. Thus the maximum is unique and assertion~(a) holds together with 
  the claimed properties.
\fussy
\end{proof}

 In the remaining of this section, we will further discuss case (b) in Theorem~\ref{thm:optf_discm}. We aim to show that the maximum $\vek^{opt}_N\in\partial\msupp$ is unique either, 
 but we proof this using a completely different idea. In order to lay the grounds for this, first, we give a lemma:

\begin{lemma}\label{lem:subdimT} 
  If $T\in\R^{N\times M}$ from (\ref{eq:returnMatrix}) is a return map satisfying Assumption~\ref{discm} and if $M\geq2$, then each return map $\tilde{T}\in\R^{N\times(M-1)}$, 
  which results from $T$ after eliminating one of its columns, is also a return map satisfying Assumption~\ref{discm}.
\end{lemma}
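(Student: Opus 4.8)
The plan is to check, one by one, the four conditions that together make up ``$\tilde T$ is a return map satisfying Assumption~\ref{discm}'': the loss requirement~(\ref{eq:loss_hist}) together with (\ref{asm1:discm}), (\ref{asm2:discm}) and (\ref{asm3:discm}). Three of these are inherited from $T$ with essentially no work, and only the no-risk-free-investment condition~(\ref{asm1:discm}) needs a genuine argument. Since every condition in Assumption~\ref{discm} is symmetric under a permutation of the $M$ columns (i.e.\ of the trading systems), I may assume without loss of generality that $\tilde T$ is obtained from $T$ by deleting the last column, so that $\tilde T$ carries the columns $k=1,\dots,M-1$; note that the biggest loss $\that_k$, depending only on the $k$-th column, is unchanged for each retained system, so the normalizations of columns $1,\dots,M-1$ are literally the same for $T$ and $\tilde T$.

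First I would dispose of the three easy conditions. The loss requirement~(\ref{eq:loss_hist}) holds for $\tilde T$ because it holds for each of the $M$ columns of $T$, in particular for the retained ones. Condition~(\ref{asm2:discm}) transfers verbatim, since the averages $\frac1N\sum_{i=1}^N t_{i,k}$ are positive for all $k=1,\dots,M$, hence for $k=1,\dots,M-1$. Condition~(\ref{asm3:discm}) follows because the columns of $\tilde T$ form a subset of the linearly independent columns of $T$, whence $\ker(\tilde T)=\{\vek[0]\}$.

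The only substantial step is (\ref{asm1:discm}) for $\tilde T$, and here the key observation is that this condition is really about \emph{directions}. Because $\vek\mapsto\sprod{\quot[i_0]^\top}{\vek}$ is linear, the sign of $\sprod{\quot[i_0]^\top}{\vek}$ is invariant under positive rescaling of $\vek$; hence (\ref{asm1:discm}) is equivalent to the statement that for every $\vek\in\R_{\geq0}^M\setminus\{\vek[0]\}$ there exists $i_0$ with $\sprod{\quot[i_0]^\top}{\vek}<0$, and this formulation no longer refers to any particular radius $\varepsilon$. I would then use the zero-padding embedding $\R^{M-1}\hookrightarrow\R^M$, $\tilde{\vek}=(\f_1,\dots,\f_{M-1})^\top\mapsto\vek:=(\f_1,\dots,\f_{M-1},0)^\top$: for any $\tilde{\vek}\in\R_{\geq0}^{M-1}\setminus\{\vek[0]\}$ its image $\vek$ lies in $\R_{\geq0}^M\setminus\{\vek[0]\}$, and since the appended component is zero the period returns coincide, i.e.\ $\sprod{\quot[i]^\top}{\vek}$ (formed with $T$) equals the analogous scalar product formed with $\tilde T$ and $\tilde{\vek}$ for every $1\leq i\leq N$. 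Applying (\ref{asm1:discm}) for $T$ to $\vek$ produces an index $i_0$ with negative return, which by this identity is precisely the index required for $\tilde T$ and $\tilde{\vek}$.

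The main obstacle is conceptual rather than computational: one has to recognize that (\ref{asm1:discm}) is scale invariant, so that the change of the admissible radius $\varepsilon$ when passing from $M$ to $M-1$ dimensions is harmless, and that padding with a zero preserves both nonnegativity and all holding period returns. Once these two observations are in place, collecting the four verified conditions shows that $\tilde T$ is again a return map satisfying Assumption~\ref{discm}, as claimed.
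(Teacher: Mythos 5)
Your proposal is correct and follows essentially the same route as the paper's proof: the loss condition and Assumptions (\ref{asm2:discm}), (\ref{asm3:discm}) are inherited trivially, and (\ref{asm1:discm}) is transferred via the zero-padding embedding $\tilde{\vek}\mapsto(\f_1,\dots,\f_{M-1},0)^\top$, which preserves all the scalar products $\sprod{\quot[i]^\top}{\vek}$. Your additional remark that (\ref{asm1:discm}) is scale-invariant is a small extra care point (the paper implicitly uses that padding with a zero preserves the Euclidean norm, so the padded vector lies on the sphere of the same radius), but it does not change the substance of the argument.
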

\begin{proof}
  Since each of the $M$ trading systems of the return matrix $T\in\R^{N\times M}$ has a biggest loss $\that_k$, $1\leq k \leq M$, the same holds for the $(M-1)$ trading systems 
  of the reduced matrix $\tilde{T}\in\R^{N\times(M-1)}$.

  For $\tilde{T}$, Assumption~\ref{discm}~(\ref{asm2:discm}) and (\ref{asm3:discm}) follow straight from the respective properties of the matrix $T$.

  Now let, without loss of generality, $\tilde{T}$ be the matrix that results from $T$ by eliminating the last column, i.e. the $M$-th trading system is omitted. 
  Let $\vekt^{(M-1)}\in\R^{M-1}$, $i=1,\dots,N$, denote the rows of $\tilde{T}$ and $\vek[\that]^{(M-1)}\in\R^{M-1}$ the vector of biggest losses of $\tilde{T}$. 
  Then for Assumption~\ref{discm}~(\ref{asm1:discm}) we have to show that
  \begin{align*}
	 \forall\, \vek^{(M-1)}\in\partial B_{\varepsilon}^{(M-1)}(0)\cap\Lambda_{\varepsilon}^{(M-1)}\,\,\exists\,i_0=i_0(\vek^{(M-1)})\in\{1,\dots,N\}, \notag
  \end{align*}
  such that	
  \begin{align}\label{eq:asm1ex3} 
	\sprod{(\nicefrac{\vekt^{(M-1)}}{\vek[\that]^{(M-1)}})^\top}{\vek^{(M-1)}}<0.
  \end{align}
  Using Assumption~\ref{discm}~(\ref{asm1:discm}) for matrix $T$ and
  \begin{align*}
  	\vek^{M}:=\begin{pmatrix}\f_1^{(M-1)}\\\vdots\\\f_{M-1}^{(M-1)}\\0\end{pmatrix}\in\partial B_{\varepsilon}^{(M)}(0)\cap\Lambda_{\varepsilon}^{(M)},
  \end{align*}
  the inequality
  \begin{align*}
 	\sprod{\quot^\top}{\vek^{(M)}}<0,
  \end{align*}
  holds true. Thus (\ref{eq:asm1ex3}) holds likewise.
\end{proof}

 Having this at hand, we can now extend Theorem~\ref{thm:optf_discm}.
\begin{corollary}\label{cor:uniquenessboundary} 
   ({\bfseries optimal $f$ uniqueness}) In the situation of Theorem~\ref{thm:optf_discm} the uniqueness also holds for case (b), i.e. a maximum $\vek_N^{opt}\in\partial\msupp$ is also 
   a unique maximum of $\TWR[N](\vek)$ in $\msupp$.
\end{corollary}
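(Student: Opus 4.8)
The plan is to argue by induction on the number of trading systems $M$, using the dimension reduction of Lemma~\ref{lem:subdimT} as the inductive engine. For the base case $M=1$ the admissible set $\msupp$ is an interval whose boundary consists of $\vek[0]$ and a point of $\msuppR$; since Theorem~\ref{thm:optf_discm} already excludes both (the maximiser is neither $\vek[0]$ nor in $\msuppR$), the maximum is automatically interior, case~(b) cannot occur, and uniqueness is immediate. For the inductive step I would assume that the combined existence-and-uniqueness assertion of Theorem~\ref{thm:optf_discm} and this corollary holds in dimension $M-1$, and then treat a matrix $T\in\R^{N\times M}$ satisfying Assumption~\ref{discm} for which the maximum is attained on $\partial\msupp$.

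First I would localise the set of maximisers. By the strict-concavity argument already used in the proof of Theorem~\ref{thm:optf_discm}, a maximiser lying in the interior $\mathring{\msupp}$ would be the unique global maximiser; hence in case~(b) no maximiser can be interior. Together with the fact (from Theorem~\ref{thm:optf_discm}) that maximisers avoid $\msuppR$, every maximiser lies in $\partial\msupp\setminus\msuppR$, and such a point must have at least one vanishing coordinate, i.e. it lies in $\bigcup_{k=1}^M\{\f_k=0\}$. Now suppose, towards a contradiction, there were two distinct maximisers $\vek^{(1)}\neq\vek^{(2)}$. The segment joining them lies in the convex set $\msupp\setminus\msuppR$ (Lemma~\ref{lem:convexity}), and by concavity of $\TWR[N]^{\nicefrac{1}{N}}$ (Lemma~\ref{lem:mconcavity}) every point of it is again a maximiser, hence again lies in $\bigcup_k\{\f_k=0\}$. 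Along the segment each coordinate $\f_k$ is an affine function of the parameter $t\in[0,1]$, so its zero set is either a single point or all of $[0,1]$; since finitely many single points cannot cover the whole interval, some coordinate $k_0$ must vanish identically on the segment, giving $\f_{k_0}^{(1)}=\f_{k_0}^{(2)}=0$. Thus both maximisers lie on the common face $\{\f_{k_0}=0\}$.

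The final step is to delete the $k_0$-th column of $T$ to form $\tilde T\in\R^{N\times(M-1)}$, which by Lemma~\ref{lem:subdimT} again satisfies Assumption~\ref{discm}. Identifying the face $\{\f_{k_0}=0\}$ with the admissible set of $\tilde T$ by dropping the (zero) $k_0$-th entry, the restriction of $\TWR[N]$ to this face coincides with the Terminal Wealth Relative of $\tilde T$, and global maximality over $\msupp$ forces the two reduced points to be global maximisers of the reduced problem. The induction hypothesis then makes that reduced maximiser unique, so $\vek^{(1)}=\vek^{(2)}$, the desired contradiction, and uniqueness in case~(b) follows. I expect the main obstacle to be precisely the geometric step showing that a whole segment of maximisers trapped in the finite union of coordinate hyperplanes must in fact share a single vanishing coordinate — this is what legitimises the passage to $\tilde T$ — together with the bookkeeping needed to verify that maximality over $\msupp$ genuinely descends to maximality of the reduced Terminal Wealth Relative over the admissible set of $\tilde T$ (in particular that zero-padding identifies the two domains and leaves the $\HPR$ constraints unchanged).
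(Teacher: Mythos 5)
Your proposal is correct and follows essentially the same route as the paper's proof: a segment of maximisers forced into $\partial\msupp\setminus\msuppR$ by convexity of $\msupp\setminus\msuppR$ and concavity of $\TWR[N]^{\nicefrac{1}{N}}$, a common vanishing coordinate along that segment, reduction to $M-1$ systems via Lemma~\ref{lem:subdimT}, and induction down to a single trading system. The only differences are cosmetic improvements on your side: you justify explicitly (via the affine-coordinate argument on the segment) why one fixed coordinate must vanish on the \emph{whole} segment, a step the paper merely asserts, and you settle the base case $M=1$ self-containedly (case (b) cannot occur there since $\partial\msupp$ consists of $\vek[0]$ and a point of $\msuppR$, both excluded by Theorem~\ref{thm:optf_discm}) rather than citing the univariate literature.
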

\begin{proof}
  Assume that the optimal solution $\vek_0:=\vek_N^{opt}\in\partial \msupp$ is not unique, then there exists an additional optimal solution $\vek^\ast\in\partial\msupp$ with $\vek^\ast\neq\vek_0$. Since $\msupp\setminus\msuppR$ is convex (c.f. Lemma~\ref{lem:convexity}), the line connecting both solutions
  \begin{align*}
		L:=\{t\cdot\vek_0+(1-t)\cdot\vek^{\ast}\mid \,t\in[0,1]\}
  \end{align*}
  is fully contained in $\msupp\setminus\msuppR$. Because of the concavity of $\TWR[N]^{\nicefrac{1}{N}}$ on $\msupp\setminus\msuppR$ (c.f. Lemma~\ref{lem:mconcavity}), all 
  points on $L$ are optimal solutions. Therefore $L$ must be a subset of $\partial\msupp\setminus\msuppR$, since we have seen that an optimal solution in the interior 
  $\mathring{\msupp}$ would be unique. Hence, there is (at least) one $k_0\in\{1,\dots,M\}$ such that, for all investment vectors in $L$, the trading system \system[$k_0$] 
  is not invested . I.e. the $k_0$-th component of $\vek_0$, $\vek^\ast$ and all vectors in $L$ is zero.

  Without loss of generality, let $k_0=M$. Then
  \begin{align*}
     \vek_0=\begin{pmatrix}\f_1\\\vdots\\\f_{M-1}\\0\end{pmatrix}\neq\begin{pmatrix}\f^\ast_1\\\vdots\\\f^\ast_{M-1}\\0\end{pmatrix}=\vek^\ast
  \end{align*}
  are two optimal solutions for
  \begin{align*}
	\TWR[N](\vek)\stackrel{\mbox{\Large !}}{=}\max
  \end{align*}
  But with that, the $(M-1)$-dimensional investment vectors $\vek^{(M-1)}_0:=(\tilde\f_1,\dots,\tilde\f_{M-1})^\top$ and $\vek^{\ast,(M-1)}:=(\f^\ast_1,\dots,\f^\ast_{M-1})^\top$
  are two distinct optimal solutions for
  \begin{align*}
     \TWR[N]^{(M-1)}(\begin{pmatrix}\f_1\\\vdots\\\f_{M-1}\end{pmatrix}):=\prod\limits_{i=1}^{N}\left(1+\sum\limits_{k=1}^{M}\f_k\frac{t_{i,k}}{\that}\right)
     \stackrel{\mbox{\Large !}}{=}\max\,.
   \end{align*}

  With Lemma~\ref{lem:subdimT} the return map $\tilde T\in\R^{N\times(M-1)}$, which results from $T$ after eliminating the $M$-th column (i.e. \system[$M$]) satisfies 
  Assumption~\ref{discm}. Applying Theorem~\ref{thm:optf_discm} to the sub-dimensional optimization problem, yields that $\vek^{(M-1)}_0$ and $\vek^{\ast,(M-1)}$ again 
  lie at the boundary of the admissible set of investment vectors $\msupp^{(M-1)}\subset\R^{M-1}$.

  Hence, we have two distinct optimal solutions on the boundary $\partial\msupp^{(M-1)}$ for the optimization problem with $(M-1)$ investment systems. By induction this 
  reasoning leads to the existence of two distinct optimal solutions for an optimization problem with just one single trading system. But for that type of problem, we 
  already know that the solution is unique (see for example \cite{maier:eto2013}), which causes a contradiction to our assumption.
  Thus, also for case (b) we have the uniqueness of the solution $\vek^{opt}_N\in\partial \msupp$.
\end{proof}

\begin{remark}\label{rem:uniqueness} 
Note that Assumption \ref{discm}(\ref{asm3:discm}) is necessary for uniqueness. To give a counterexample imagine a return matrix $T$ with two equal columns,
meaning the same trading system is used twice. Let $\varphi^{opt}$ be the optimal f for this one dimensional trading system. Then it is easy to see that $(\varphi^{opt},0)$,
$(0,\varphi^{opt})$ and the straight line connecting these two points yield TWR optimal solutions for the return matrix $T$.

\end{remark}


\section{Example}   \label{sec:4}

 As an example we fix the joint return matrix $T:=\left(t_{i,k}\right)_{\substack{1\leq i\leq 6\\1\leq k\leq 4 }}$ for $M=4$ trading systems and the returns from $N=6$ periods 
 given through the following table.
 \begin{align}\label{eq:return_ex1}
 \begin{tabular}{c|rrrr}
    period	& \system[1]	            & \system[2]	            & \system[3]     & \system[4]     \\   \hline
    $1$ 	& $2\quad\quad$			    & $1\quad\quad$			    & $-1\quad\quad$ & $1\quad\quad$  \\
    $2$		& $2\quad\quad$			    & $-\tfrac{1}{2}\quad\quad$	& $2\quad\quad$	 & $-1\quad\quad$ \\
    $3$ 	& $-\tfrac{1}{2}\quad\quad$	& $1\quad\quad$			    & $-1\quad\quad$ & $2\quad\quad$  \\
    $4$		& $1\quad\quad$			    & $2\quad\quad$			    & $2\quad\quad$	 & $-1\quad\quad$ \\
    $5$ 	& $-\tfrac{1}{2}\quad\quad$	& $-\tfrac{1}{2}\quad\quad$	& $2\quad\quad$	 & $1\quad\quad$  \\
    $6$		& $-1\quad\quad$			& $-1\quad\quad$			& $-1\quad\quad$ & $-1\quad\quad$ \\
 \end{tabular}
 \end{align}
 Obviously every system produced at least one loss within the $6$ periods, thus the vector $\vek[\that]=(\that_1,\that_2,\that_3,\that_4)^\top$ with
 \begin{align*}
    \that_k=\max\limits_{1\leq i\leq 6}\{|t_{i,k}|\mid t_{i,k}<0\}=1,\quad k=1,\dots,4,
 \end{align*}
 is well-defined. For $\vek\in\msupp\setminus\msuppR$ the $\TWR[6]$ takes the form
 \begin{align*}\index{Terminal Wealth Relative!discrete}
    \TWR[6](\vek)=&(1+2\f_1+\f_2-\f_3+\f_4)(1+2\f_1-\tfrac{1}{2}\f_2+2\f_3-\f_4)\\
         &(1-\tfrac{1}{2}\f_1+\f_2-1\f_3+2\f_4)(1+\f_1+2\f_2+2\f_3-\f_4)\\
         &(1-\tfrac{1}{2}\f_1-\tfrac{1}{2}\f_2+2\f_3+1\f_4)(1-\f_1-\f_2-\f_3-\f_4),
 \end{align*}
 where the set of admissible vectors is given by
 \begin{align*}\index{admissible vector of fractions}
      \msupp&=\{\vek\in\R_{\geq0}^4\mid\sprod{\quot^\top}{\vek}\geq-1,\,\forall\,1\leq i\leq 6\}\\
			&=\{\vek\in\R^4_{\geq0}\mid\sprod{\quot[6]^\top}{\vek}=\min\limits_{i=1,\dots,6}\sprod{\quot^\top}{\vek}\geq-1\}\\
            &=\{\vek\in[0,1]^4\mid\f_1+\f_2+\f_3+\f_4\leq1\}.
 \end{align*}
 Since for all $\vek\in\msupp$
 \begin{align*}
		\sprod{\quot^\top}{\vek}\geq\sprod{\quot[6]}{\vek}\geq-1\quad\forall\,i=1,\dots,6
 \end{align*}
 we have
 \begin{align*}
   \sprod{\quot^\top}{\vek}=-1\text{ for some }i\in\{1,\dots,6\}\quad\Rightarrow\quad\sprod{\quot[6]^\top}{\vek}=-1.
 \end{align*}
 Accordingly we get
 \begin{align*}
    \msuppR&=\{\vek\in\msupp\mid\,\exists\,1\leq i_0\leq 6\text{ s.t. }\sprod{\quot^\top}{\vek}=-1\}\\
	       &=\{\vek\in[0,1]^4\mid\f_1+\f_2+\f_3+\f_4=1\}.
 \end{align*}
 When examining the $6$-th row $\vekt[6]=(-1,-1,-1,-1)$ of the matrix $T$ we observe that Assumption \ref{discm}(\ref{asm1:discm}) is fulfilled with $i_0=6$. To see that let, for 
 some $\varepsilon>0$, $\vek\in\partial B_\varepsilon\cap\Lambda_\varepsilon$, then
 \begin{align*}
   \sprod{\quot[6]^\top}{\vek}=-\f_1-\f_2-\f_3-\f_4<0.
 \end{align*}
 For Assumption \ref{discm}(\ref{asm2:discm}) one can easily check that all four systems are ``profitable'', since the mean values of all four columns in (\ref{eq:return_ex1}) are 
 strictly positive.
 Lastly, for Assumption \ref{discm}(\ref{asm3:discm}) we check that the rows of matrix $T$ are linearly independent
 \begin{align*}
   \det\begin{vmatrix}\vekt[1]\\ \vekt[2]\\ \vekt[3]\\ \vekt[4]\end{vmatrix}=\det\begin{vmatrix}2&1&-1&1\\2&-\tfrac12&2&-1\\-\tfrac12&1&-1&2\\1&2&2&-1\end{vmatrix}=22.75\neq0.
 \end{align*}
 Thus Theorem \ref{thm:optf_discm} yields the existence and uniqueness of an optimal investment fraction $\vek^{opt}_6\in\msupp$ with $\vek^{opt}_6\neq0$, $\vek^{opt}_6\notin\msuppR$ 
 and $\TWR[6](\vek^{opt}_6)>1$, which can numerically be computed
 \begin{align*}
   \vek^{opt}_6\approx\begin{pmatrix}0.2362\\0.0570\\0.1685\\0.1012\end{pmatrix}.
 \end{align*}

 In the above example, a crucial point is that there is one row in the return matrix where the $k$-th entry is the biggest loss of \system[k], $k=1,\dots,6$. Such a row in the return matrix implies, that all trading systems realized their biggest loss simultaneously, which can be seen as a strong evidence against a sufficient diversification of the systems. Hence we analyze Assumption \ref{discm}(\ref{asm1:discm}) a little closer to see what happens if this is not the case.

 With the help of Assumption~\ref{discm}(\ref{asm1:discm}), for all $\vek\in \partial B_\varepsilon(0)\cap\Lambda_\varepsilon$, there is a row of the return matrix $\vekt[i_0]$, 
 $i_0\in\{1,\dots,N\}$ such that $\sprod{\quot[i_0]^\top}{\vek}<0$. The sets
 \begin{align*}
	\{\vek\in\R^M\mid\sprod{\quot^\top}{\vek}=0\},\quad i=1,\dots,N
 \end{align*}
 describe the hyperplanes generated by the normal direction $\quot^\top\in\R^M$, $i=1,\dots,N$. Thus each $\vek$ from the set $ \partial B_\varepsilon(0)\cap\Lambda_\varepsilon$ 
 has to be an element of one of the half spaces
 \begin{align*}
	H_i:=\{\vek\in\R^M\mid\sprod{\quot^\top}{\vek}\leq0\}, \quad i=1,\dots,N.
 \end{align*}
 In other words the set  $\partial B_\varepsilon(0)\cap\Lambda_\varepsilon$ has to be a subset of a union of half spaces
 \begin{align*}
	 \left(\partial B_\varepsilon(0)\cap\Lambda_\varepsilon\right)\subset\bigcup_{i=1}^N H_i.
 \end{align*}
 If there exists an index $i_0$ such that $t_{i_0,k}=-\that_k$ for all $1\leq k\leq M$, then the normal direction of the corresponding hyperplane is
 \begin{align*}
	\quot[i_0]^\top=\begin{pmatrix}-1\\-1\\\vdots\\-1\end{pmatrix}\in\R^M,
 \end{align*}
 hence
 \begin{align*}
	\left(\partial B_\varepsilon(0)\cap\Lambda_\varepsilon\right)\subset\R^M_{\geq0}\subset H_{i_0}
 \end{align*}
 and therefore Assumption~\ref{discm}(\ref{asm1:discm}) is fulfilled.
 Figure \ref{fig:hyperplanes2} shows a hyperplane for $M=2$ and a row of the return matrix where all entries are the biggest losses, that means the normal direction 
 of this hyperplane is the vector
\begin{align*}
	\nicefrac{\begin{pmatrix}-\that_1\\ -\that_2\end{pmatrix}}{\begin{pmatrix}\that_1\\\that_2\end{pmatrix}}=\begin{pmatrix}-1\\-1\end{pmatrix}.
\end{align*}

\noindent
\begin{figure}[htbp]
		\centering
		\begin{tikzpicture}
				\node[anchor=south west,inner sep=0] at (0,0) {\includegraphics[width=.49\textwidth]{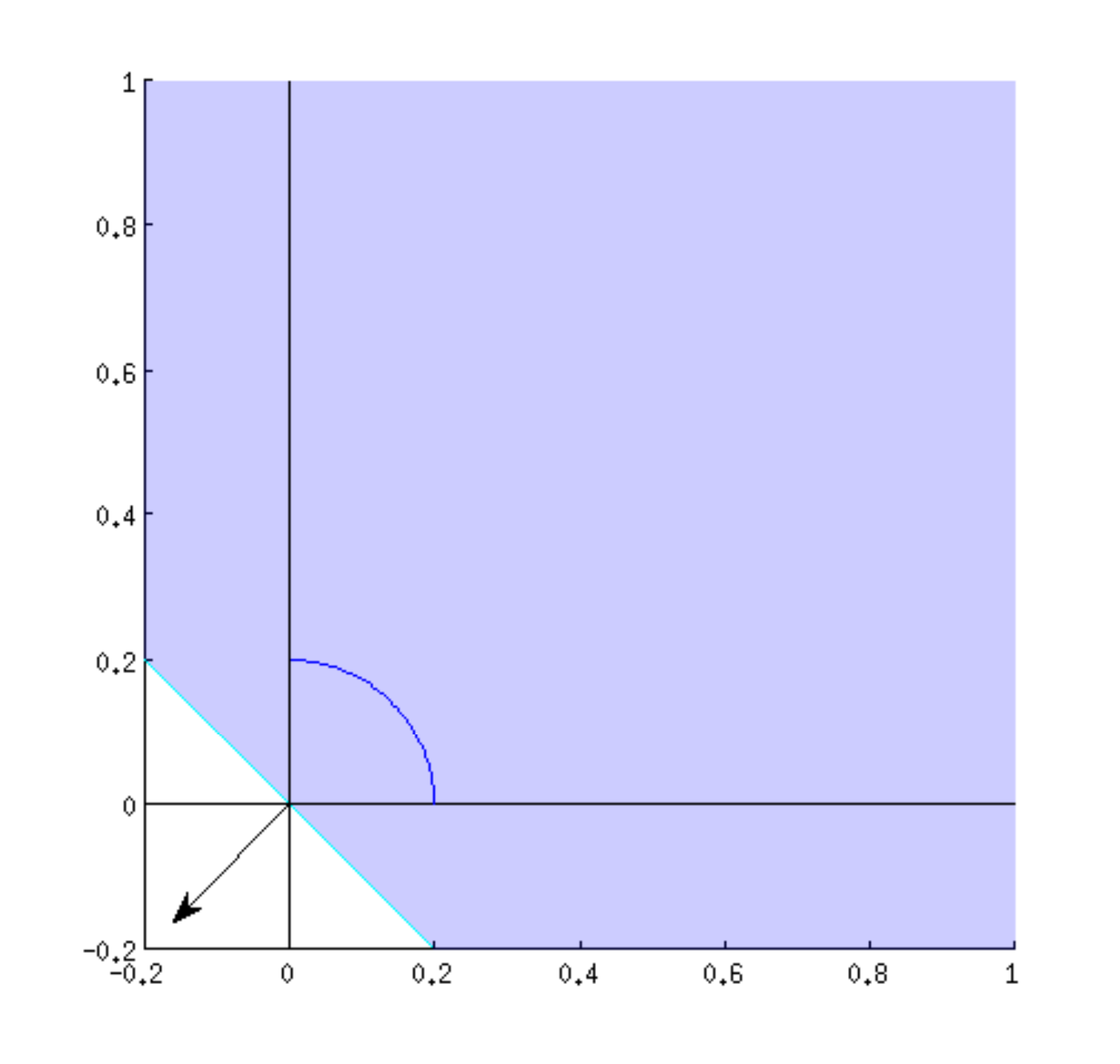}};
				\node at (1.5,6.1) {$\scriptstyle\f_2$};\node at (6.3,1.4) {$\scriptstyle\f_1$};
		\end{tikzpicture}
		\caption{Hyperplane for a return vector consisting of ``biggest losses''}\label{fig:hyperplanes2}
\end{figure}

 However, it is not necessary for Assumption \ref{discm}(\ref{asm1:discm}) that the set $\partial B_\varepsilon(0)\cap\Lambda_\varepsilon$ is covered by just one hyperplane.
 Again for $M=2$ an illustration of possible hyperplanes can be seen in Figure \ref{fig:hyperplanes1}. The figure on the left shows a case where Assumption~\ref{discm}(\ref{asm1:discm}) 
 is violated and the figure on the right a case where it is satisfied.

\begin{figure}[ht]\centering
		\begin{tikzpicture}
				\node[anchor=south west,inner sep=0] at (0,0) {\includegraphics[width=.49\textwidth]{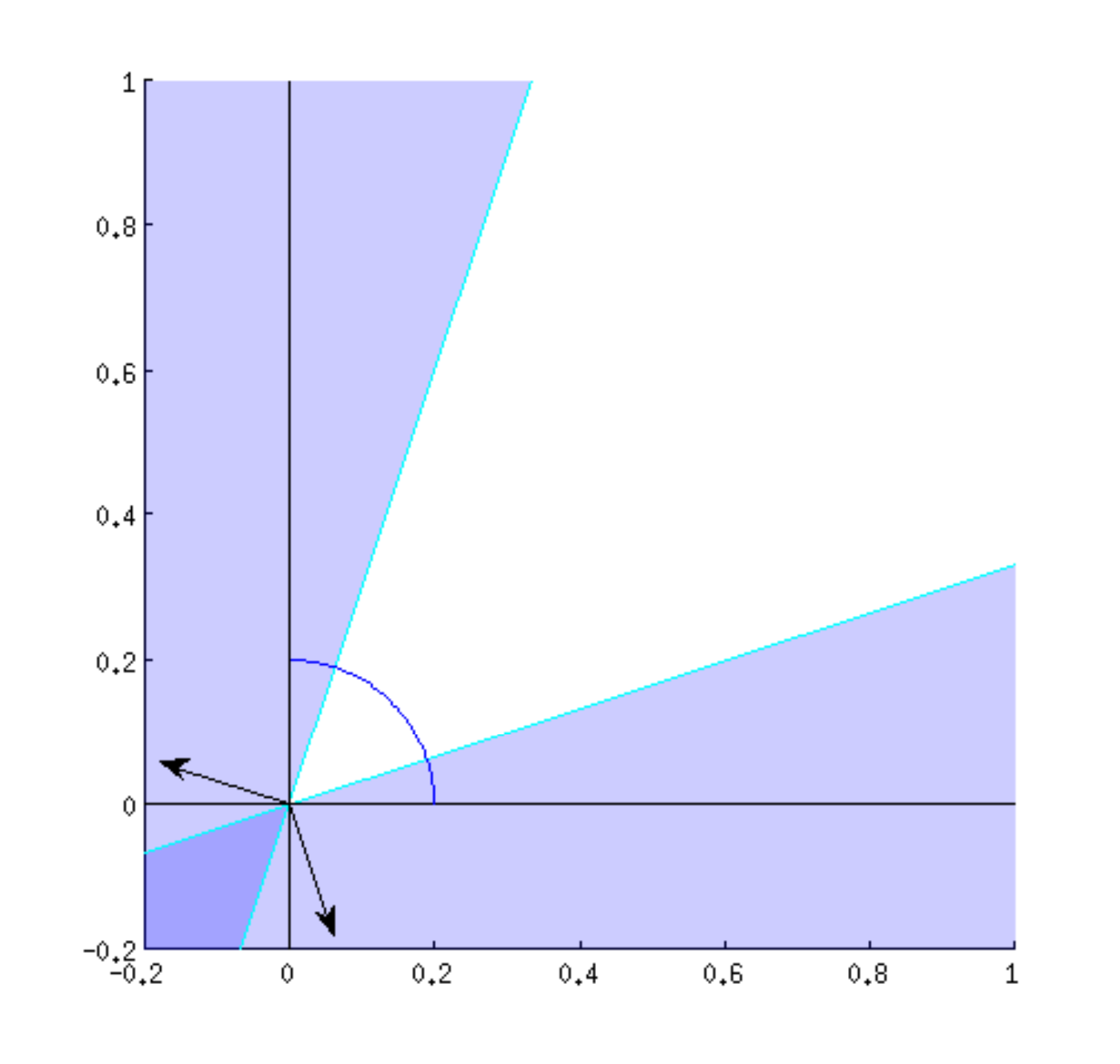}};
				\node at (1.5,6.1) {$\scriptstyle\f_2$};\node at (6.3,1.4) {$\scriptstyle\f_1$};
		\end{tikzpicture}
				\begin{tikzpicture}
				\node[anchor=south west,inner sep=0] at (0,0) {\includegraphics[width=.49\textwidth]{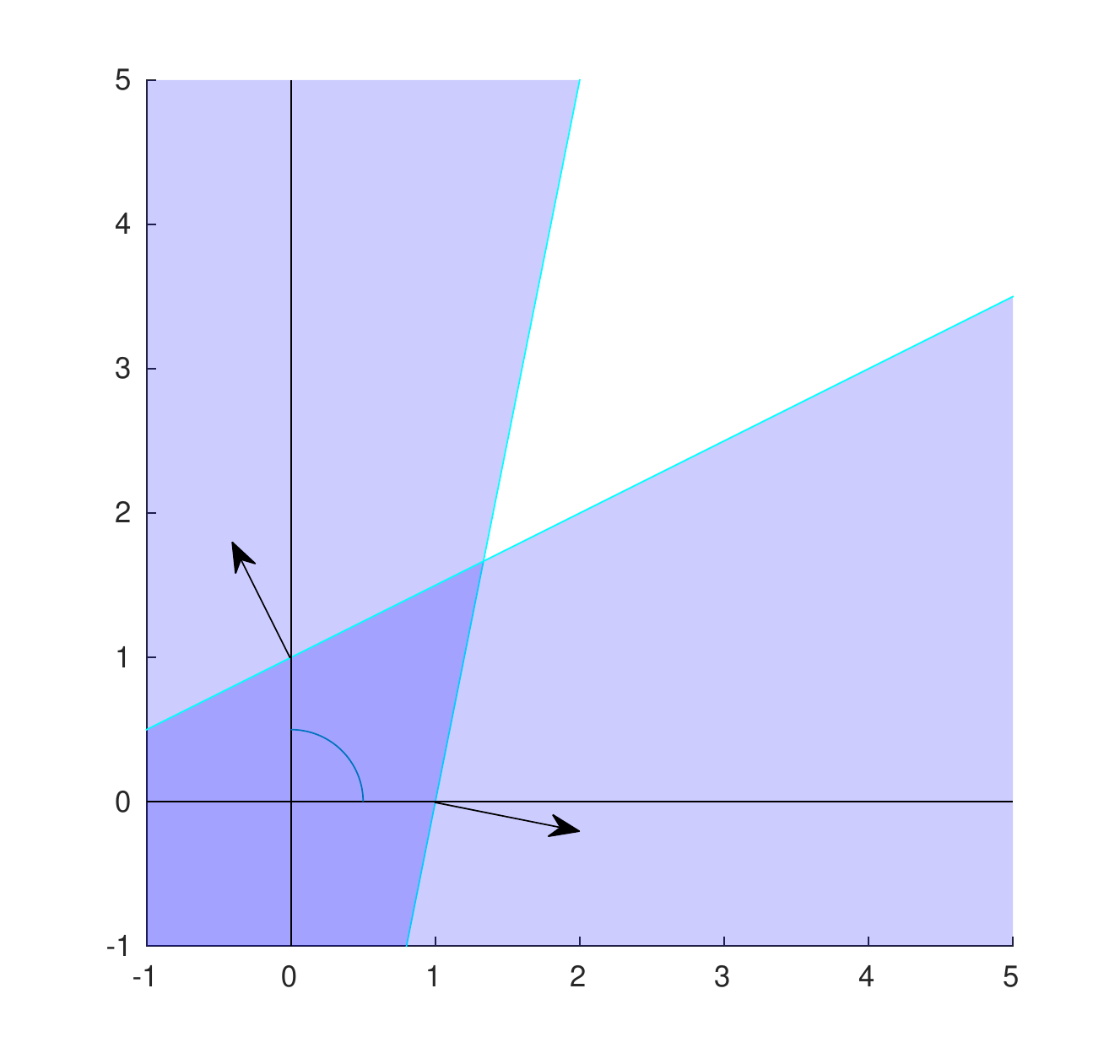}};
				\node at (1.5,6.1) {$\scriptstyle\f_2$};\node at (6.3,1.4) {$\scriptstyle\f_1$};
		\end{tikzpicture}
 \caption{Two hyperplanes and the set $	\partial B_\varepsilon(0)\cap\Lambda_\varepsilon$}\label{fig:hyperplanes1}
\end{figure}

 For the next example we fix the return matrix $T$ as
 \begin{align}\label{eq:return_ex2}
	T:=\frac{1}{5}\begin{pmatrix}-3 & 3\\ 9 & 12\\ 6 & -3\\ -6 & \nicefrac{3}{2}\\ 3 & -\nicefrac{15}{2}\\\end{pmatrix},
 \end{align}
 with $N=5$ and $M=2$. Thus the biggest losses of the two systems are
 $$\that_1=\frac{6}{5}\quad\text{and}\quad\that_2=\frac{3}{2}.$$
 To determine the set of admissible investments (and to check  Assumption \ref{discm}) we examine the vectors $\quot$ for $i=1,\dots,5$
 \begin{align}\label{eq:normedreturnmatrix}
    A:=\begin{pmatrix}-\nicefrac{1}{2}&\nicefrac{2}{5}\\ \nicefrac{3}{2}&\nicefrac{8}{5}\\ 1&-\nicefrac{2}{5}\\ -1&\nicefrac{1}{5}\\ \nicefrac{1}{2}&-1\\ \end{pmatrix}
 \end{align}
 and solve the linear equations
 \begin{align}\label{eq:admissible}
				\sprod{\quot^\top}{\vek}=-1,\quad i=1,\dots,5.
 \end{align}
 The solutions for $i=1,\dots,5$ are shown in Figure \ref{fig:admissible}.

\noindent
\begin{figure}[ht]\centering
		\begin{tikzpicture}
				\node[anchor=south west,inner sep=0] at (0,0) {\includegraphics[width=.8\textwidth]{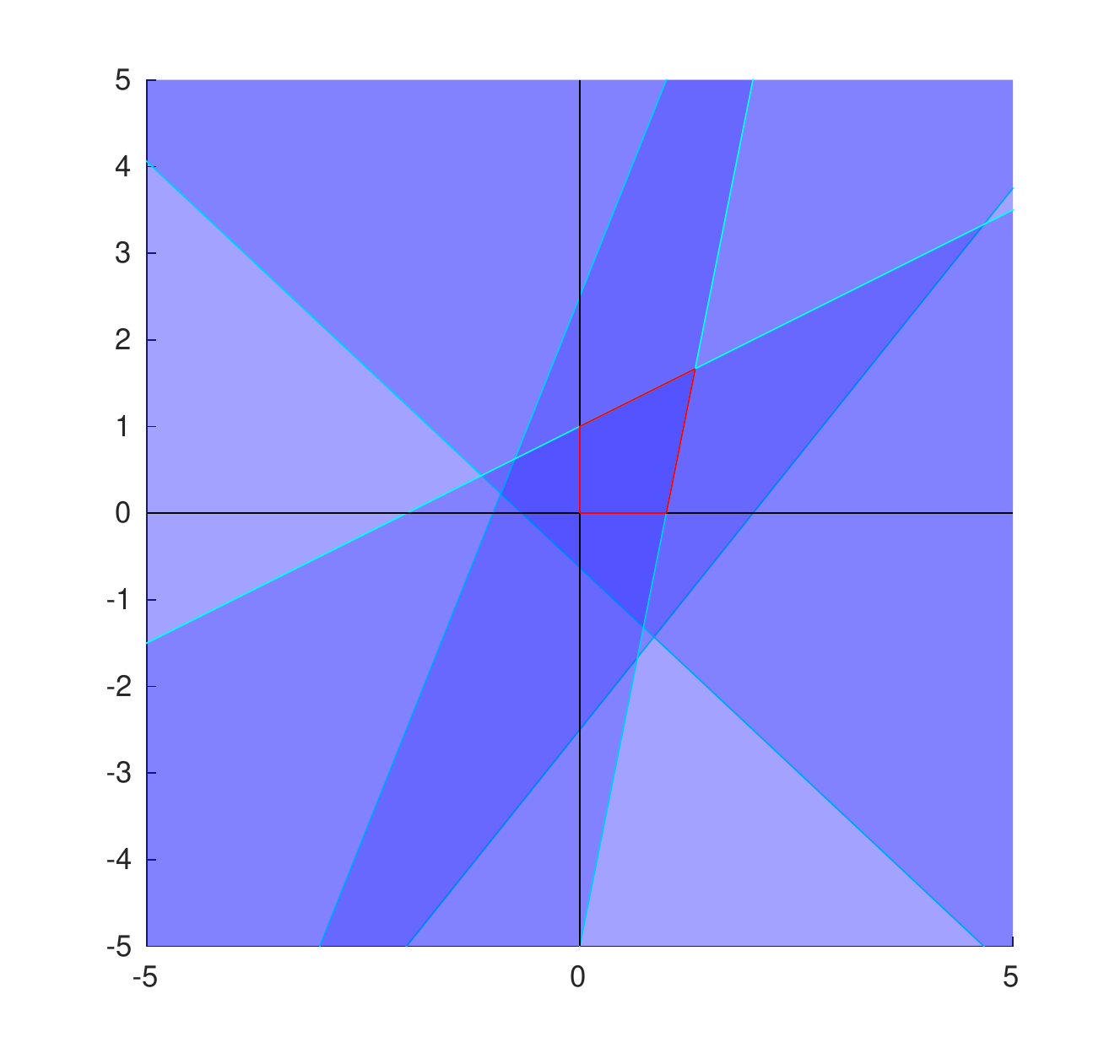}};
				\node at (5.4,9.8) {$\f_2$};\node at (10.2,5.1) {$\f_1$};
		\end{tikzpicture}
 \caption{Solutions of the linear equations from (\ref{eq:admissible})}\label{fig:admissible}
\end{figure}

 Each solution corresponds to a ``cyan'' line. The area where the inequality $	\sprod{\quot^\top}{\vek}\geq-1$ holds for some $i\in\{1,\dots,5\}$ is shaded in ``light blue''. 
 The set where the inequalities hold for all $i=1,\dots,5$ is the section where all shaded areas overlap, thus the ``dark blue'' section. Therefore the set of admissible 
 investments is given by
 \begin{align*}\index{admissible vector of fractions}
	\msupp&=\{\vek\in\R_{\geq0}^2\mid \sprod{\quot^\top}{\vek}\geq-1,\,\forall\,1\leq i\leq 5\}\\
									&=\{\vek\in\R^2_{\geq0}\mid \f_2\leq 1+\tfrac12\f_1\text{ and } \f_1\leq 1+\tfrac15\f_2\},
 \end{align*}
 with
 \begin{align*}
    \msuppR&=\{\vek\in\msupp\mid\,\exists\,1\leq i_0\leq 5\text{ s.t. }\sprod{\quot^\top}{\vek}=-1\}\\
									  &=\{\vek\in\R^2_{\geq0}\mid \f_2= 1+\tfrac12\f_1\text{ or } \f_1= 1+\tfrac15\f_2\}.
 \end{align*}
 Assumption \ref{discm} is fulfilled, since
 \begin{enumerate}
	\item the half spaces for rows $4$ and $5$ of the return matrix cover the whole set $\R^2_{\geq0}$ (cf. Figure~\ref{fig:hyperplanes1} b),
	\item $\frac{1}{5}\sum\limits_{i=1}^5 t_{i,1}=\frac{9}{5}>0$ and $\frac{1}{5}\sum\limits_{i=1}^5 t_{i,2}=\frac{6}{5}>0$ and
	\item obviously, the columns of the return matrix are linearly independent.
 \end{enumerate}
 A plot of the \TWRtext for the return matrix $T$ from (\ref{eq:return_ex2}) can be seen in Figure \ref{fig:TWR_ex2_full} and  \ref{fig:TWR_ex2_above} with a maximum at
 \begin{align}\label{eq:optfdiscmex}
	\vek^{opt}_5\approx\begin{pmatrix}0.4109\\0.3425\end{pmatrix}.
 \end{align}
 
\begin{figure}[ht]\centering
	\begin{tikzpicture}
					\node[anchor=south west,inner sep=0] at (0,0) {\includegraphics[width=.8\textwidth]{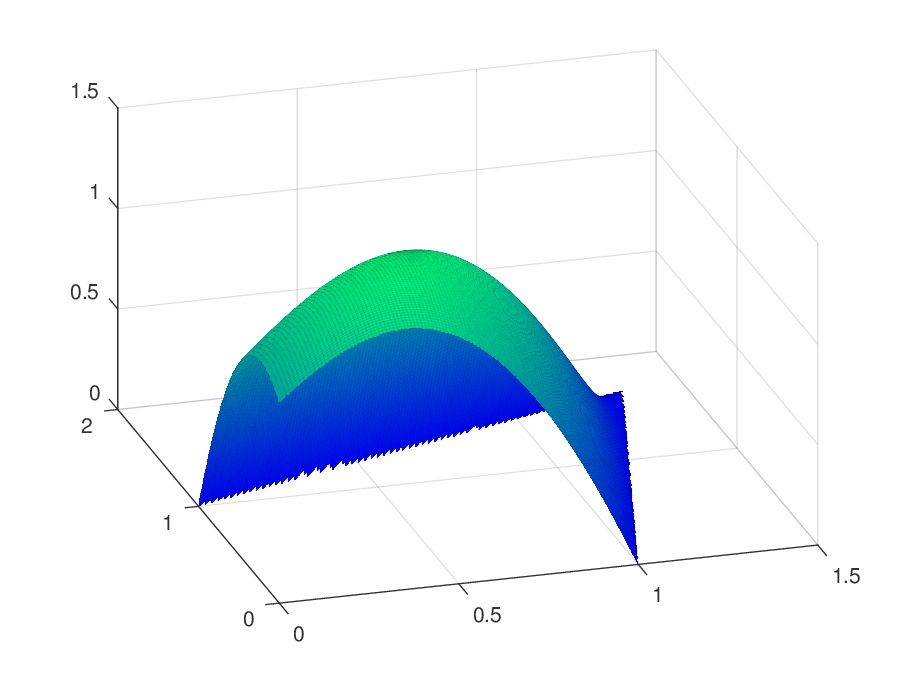}};
			\node at (0.5,3.1) {$\f_2$};\node at (10.5,0.8) {$\f_1$};\node at (1.2,7.6) {$\TWR[5](\f_1,\f_2)$};
	\end{tikzpicture}
 \caption{The \TWRtext for $T$ from (\ref{eq:return_ex2})}\label{fig:TWR_ex2_full}
\end{figure}
\begin{figure}[ht]\centering
	\begin{tikzpicture}
				\node[anchor=south west,inner sep=0] at (0,0) {\includegraphics[width=.8\textwidth]{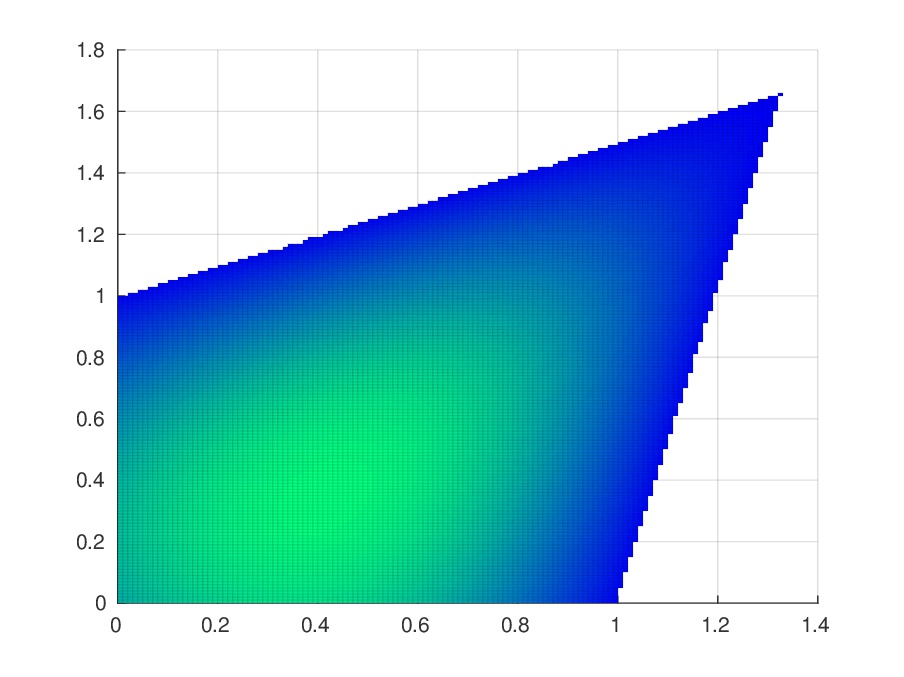}};
				\node at (0.6,7.8) {$\f_2$};\node at (10.5,0.5) {$\f_1$};
	\end{tikzpicture}
  \caption{The \TWRtext from Figure~\ref{fig:TWR_ex2_full}, view from above}\label{fig:TWR_ex2_above}
\end{figure}

 Therefore the maximum is clearly attained in the interior $\mathring{\msupp}$.

 The following example will show that the unique maximum $\vek_N^{opt}$ of Theorem~\ref{thm:optf_discm} can indeed be attained on $\partial\msupp$, i.e. the case discussed in Corollary~\ref{cor:uniquenessboundary}. For that we add a third investment system to our last example (\ref{eq:normedreturnmatrix}) with the new returns
 \begin{align*}
    t_{1,3},t_{2,3},t_{3,3}=1\text{ and }t_{4,3},t_{5,3}=-1 \quad \text{(hence $\that_3=1$)}
 \end{align*}
 such that the vectors $\quot$, $i=1,\dots,5$, form the matrix
 \begin{align}\label{eq:returnsex3} 
    \tilde A:=(a_{i,k})_{\substack{i=1,\dots,5\\k=1,\dots,3}}=\begin{pmatrix}-\nicefrac{1}{2}&\nicefrac{2}{5}&1\\ \nicefrac{3}{2}&\nicefrac{8}{5}&1\\ 1&-\nicefrac{2}{5}&1\\ -1&\nicefrac{1}{5}&-1\\ \nicefrac{1}{2}&-1&-1\\ \end{pmatrix}\in\R^{5\times 3}
 \end{align}
 This set of trading systems fulfills Assumption~\ref{discm}(\ref{asm2:discm}) since $\sum\limits_{i=1}^{N=5}t_{i,3}=1>0$.

 Assumption~\ref{discm}(\ref{asm3:discm}) is satisfied as well, because the three columns of $\tilde{A}$ are linearly independent.
 For Assumption~\ref{discm}(\ref{asm1:discm})  we have to show that
 \begin{align}\label{ex:ass1} 
	\forall\,\vek\in\partial B_\varepsilon(0)\cap\Lambda_\varepsilon\,\, \exists\, i_0=i_0(\vek),\text{ with }\sprod{\quot[i_0]^\top}{\vek}<0
 \end{align}
 holds. If not, we would have an investment vector
 \begin{align*}
    \hat\vek=\begin{pmatrix}\hat\f_1,\hat\f_2,\hat f\end{pmatrix}\in\partial B_\varepsilon(0)\cap\Lambda_\varepsilon,
 \end{align*}
 such that (\ref{ex:ass1}) is not true for all rows of the matrix $\tilde A$. In particular if we look at lines 4 and 5
 \begin{align*}
    -\hat\f_1+\frac{1}{5}\hat\f_2-\hat f &\geq 0\\
    \frac{1}{2}\hat\f_1 -\hat\f_2-\hat f &\geq 0,
 \end{align*}
 the sum of both inequalities still has to be true
 \begin{align*}
   -\frac{1}{2}\hat\f_1-\frac{4}{5}\hat\f_2 - 2\hat f \geq 0,
 \end{align*}
 which is a contradiction to $\hat\vek$ being an element of $\partial B_\varepsilon(0)\cap\Lambda_\varepsilon\subset\R^3_{\geq0}$.

 Now we examine the following vector of investments
 \begin{align*}
	\vek^{\ast}=\begin{pmatrix}\f_1^{\ast}\\\f_2^{\ast}\\f^{\ast}\end{pmatrix}:=\begin{pmatrix}\f_1^{\ast}\\\f_2^{\ast}\\0\end{pmatrix}
 \end{align*}
 with $(\f_1^{\ast},\f_2^{\ast})^\top\approx(0.4109,0.3425)^\top$ the unique maximum of the optimization problem of the reduced set of trading systems from the last example 
 (cf. (\ref{eq:optfdiscmex})).

 The first derivative of the \TWRtext in the direction of the third component at $\vek^{\ast}$ is given by
 \begin{align*}
	\frac{\partial}{\partial f}\TWR[5](\vek^{\ast})=\underbrace{\TWR[5](\vek^{\ast})}_{>0}\cdot\sum\limits_{i=1}^{N=5}\frac{a_{i,3}}{1+\sprod{\quot^\top}{\vek^{\ast}}}	\approx-0.359<0
 \end{align*}
 Moreover with $\vek^{\ast}$ being the optimal solution of the last example in two variables we have
 \begin{align*}
	\frac{\partial}{\partial \f_1}\TWR[5](\f_1^{\ast},\f_2^{\ast},0)=0=\frac{\partial}{\partial \f_2}\TWR[5](\f_1^{\ast},\f_2^{\ast},0)
	\intertext{and}
	\nicefrac{\partial^2}{\partial \f_i^2}\TWR[5](\f_1^\ast,\f_2^\ast,0)<0,\quad i=1,2.
 \end{align*}
 Thus $\vek^{\ast}$ is indeed a local maximal point on the boundary of $\msupp$ for $\TWR[5]$ with the three trading systems in (\ref{eq:returnsex3}). 
 Corollary~\ref{cor:uniquenessboundary} yields the uniqueness of this maximal solution for
 \begin{align*}
	\underset{\vek\in\msupp}{\text{maximize}}\quad\TWR[5](\vek).
 \end{align*}


\vspace*{0.5cm}
      \section{Conclusion} \label{sec:5}    

 With our main theorems, Theorem~\ref{thm:optf_discm} and Corollary~\ref{cor:uniquenessboundary}, we were
 able give a complete existence and uniqueness theory for the optimization problem \eqref{prob:discm} of a
 multivariate \TWRtext under reasonable assumptions. Furthermore, due to the convexity of the domain $\msupp$
 (Lemma~\ref{lem:convexity}), the concavity of $\left[\text{TWR} (\cdot)\right]^{1/N}$
 (see Lemma~\ref{lem:mconcavity}) and the uniqueness of the ``optimal $f$'' solution, it is always guaranteed
 that simple numerical methods like steepest ascent will find the maximum.

\vspace*{0.8cm}


\begin{thebibliography}{}

\bibitem{hermes:mft2016} 
  \textsc{Andreas Hermes,}
  {\em A mathematical approach to fractional trading, PhD--thesis,}
  Institut f\"ur Mathematik, RWTH Aachen, (2016).

\bibitem{kelly:nii} 
  \textsc{J. L. Kelly, Jr.}
  {\em A new interpretation of information rate,}
  Bell System Technical J. 35:917-926, (1956).

\bibitem{prado:orb2013} 
  \textsc{Marcos Lopez de Prado, Ralph Vince and Qiji Jim Zhu,}
  {\em Optimal risk budgeting under a finite investment horizon,}
  Availabe at SSRN 2364092, (2013).

\bibitem{maier:eto2013} 
  \textsc{Stanislaus Maier--Paape,}
  {\em  Existence theorems for optimal fractional trading,}
  Institut f\"ur Mathematik, RWTH Aachen, Report Nr. 67 (2013).

\bibitem{maier:optf2015} 
  \textsc{Stanislaus Maier--Paape,}
  {\em Optimal \mbox{\boldmath$f$} and diversification,}
  International Federation of Technical Analysis Journal, 15:4-7, (2015).

\bibitem{maier:raft2016} 
  \textsc{Stanislaus Maier--Paape,}
  {\em Risk averse fractional trading using the current drawdown,}
  Institut f\"ur Mathematik, RWTH Aachen, Report Nr. 88 (2016).

\bibitem{markowitz:pfs1991} 
  \textsc{Henry M. Markowitz,}
  {\em Portfolio Selection,}
  FinanzBuch Verlag, (1991).

\bibitem{vince:pmf90} 
  \textsc{Ralph Vince,}
  {\em Portfolio Management Formulas: Mathematical Trading Methods for the Futures, Options, and Stock Markets,}
  John Wiley \& Sons, Inc., (1990).

\bibitem{vince:mmm92} 
  \textsc{Ralph Vince,}
  {\em The Mathematics of Money Management, Risk Analysis Techniques \\ for Traders,}
  A Wiley Finance Edition, John Wiley \& Sons, Inc., (1992).

\bibitem{vince:rpm09} 
  \textsc{Ralph Vince,}
  {\em The Leverage Space Trading Model: Reconciling Portfolio Management  Strategies and Economic Theory,}
  Wiley Trading, (2009).

\bibitem{vince:ips2013} 
  \textsc{Ralph Vince and Qiji Jim Zhu,}
  {\em Inflection point significance for the investment size,}
  Availabe at SSRN 2230874, (2013).

\bibitem{zhu:mais2007} 
  \textsc{Qiji Jim Zhu,}
  {\em Mathematical analysis of investment systems,}
  J. of Math. Anal. Appl. 326, pp. 708--720 (2007).  
  
\end{thebibliography}

\clearpage

\end{document}

 \bibitem{ferguson:kbs}  
  \textsc{T. Ferguson,}
  {\em The Kelly Betting System for Favorable Games,}
  Statistics Department, UCLA.

\bibitem{tharp:cts01} 
  \textsc{K. van Tharp,}
  {\em  Van Tharp's definite guide to position sizing,}
  The International Institute of Trading Mastery, (2008).